\documentclass[pdflatex,sn-basic,Numbered]{sn-jnl}

\usepackage{graphicx}%
\usepackage{multirow}%
\usepackage{amsmath,amssymb,amsfonts}
\usepackage{mathtools}%
\usepackage{amsthm}%
\usepackage{mathrsfs}%
\usepackage[title]{appendix}%
\usepackage{xcolor}%
\usepackage{textcomp}%
\usepackage{manyfoot}%
\usepackage{booktabs}%
\usepackage{array}%
\usepackage{algorithm}%
\usepackage{algorithmicx}%
\usepackage{algpseudocode}%
\usepackage{listings}%
\usepackage{tikz-cd}
\usetikzlibrary{arrows.meta}%
\usepackage[shortlabels]{enumitem}
\usepackage{stmaryrd}   

\newcommand{\Path}{\mathsf{Path}}
\newcommand{\Step}{\mathsf{Step}}
\newcommand{\StepTwo}{\mathsf{Step}_2}
\newcommand{\PathTwo}{\mathsf{Path}_2}
\newcommand{\SeqTwo}{\mathsf{Seq}_2}
\newcommand{\Nil}{\mathsf{Nil}}
\newcommand{\Seq}{\mathsf{Seq}}
\newcommand{\concatop}{\mathbin{\bullet}}
\newcommand{\invop}{^{-1}}
\newcommand{\ap}{\mathsf{ap}}
\newcommand{\refl}{\mathsf{refl}}
\newcommand{\sym}{\mathsf{sym}}
\newcommand{\trans}{\mathsf{trans}}
\newcommand{\Beta}{\mathsf{Beta}}
\newcommand{\BetaTwo}{\mathsf{Beta}_2}
\newcommand{\EtaTwo}{\mathsf{Eta}_2}
\newcommand{\LamCongTwo}{\mathsf{LamCong}_2}
\newcommand{\Eta}{\mathsf{Eta}}
\newcommand{\ApCong}{\mathsf{ApCong}}
\newcommand{\LamCong}{\mathsf{LamCong}}

\newcommand{\EtaExpand}{\mathsf{EtaExpand}}
\newcommand{\IdFun}{\mathsf{IdFun}}
\newcommand{\etaPath}{\eta}
\newcommand{\lamCongPath}{\mathsf{lamCong}}
\newcommand{\funPath}{\mathsf{funPath}}
\newcommand{\homotopy}{\sim}
\newcommand{\HomotopyStep}{\mathsf{HomotopyStep}}
\newcommand{\HomotopyStepI}{\mathsf{HomotopyStep}^{\mathrm{I}}}
\newcommand{\HomotopyPath}{\mathsf{HomotopyPath}}
\newcommand{\HomotopyPathI}{\mathsf{HomotopyPath}^{\mathrm{I}}}
\newcommand{\EvalHSI}{\mathsf{EvalHSI}}
\newcommand{\EvalHPI}{\mathsf{EvalHPI}}
\newcommand{\EHSRefl}{\mathsf{EHS}_{\mathsf{refl}}}
\newcommand{\EHSBeta}{\mathsf{EHS}_{\mathsf{Beta}}}
\newcommand{\EHSEta}{\mathsf{EHS}_{\mathsf{Eta}}}
\newcommand{\EHSApCong}{\mathsf{EHS}_{\mathsf{ApCong}}}
\newcommand{\EHSLamCong}{\mathsf{EHS}_{\mathsf{LamCong}}}
\newcommand{\EHSSym}{\mathsf{EHS}_{\mathsf{Sym}}}
\newcommand{\EHSTrans}{\mathsf{EHS}_{\mathsf{Trans}}}
\newcommand{\EHNilI}{\mathsf{EH}_{\mathsf{Nil}}}
\newcommand{\EHSeqI}{\mathsf{EH}_{\mathsf{Seq}}}
\newcommand{\whiskL}{\mathsf{whisk}_L}
\newcommand{\whiskR}{\mathsf{whisk}_R}
\newcommand{\assoc}{\mathsf{assoc}}
\newcommand{\leftInv}{\mathsf{leftInv}}
\newcommand{\rightInv}{\mathsf{rightInv}}
\newcommand{\leftId}{\mathsf{leftId}}
\newcommand{\rightId}{\mathsf{rightId}}
\newcommand{\invDist}{\mathsf{invDist}}
\newcommand{\invInv}{\mathsf{invInv}}
\newcommand{\cancelL}{\mathsf{cancel}}

\newcommand{\parity}{\mathsf{parity}}
\newcommand{\pathToStep}{\mathsf{pathToStep}}
\newcommand{\lembed}{\mathsf{lembed}}
\newcommand{\concatTwo}{\mathsf{concat}_2}
\newcommand{\invTwo}{\mathsf{inv}_2}

\newcommand{\NatStepI}{\mathsf{NatStepI}}
\newcommand{\NatPathI}{\mathsf{NatPathI}}

\newcommand{\defineq}{\equiv}

\theoremstyle{plain}
\newtheorem{theorem}{Theorem}[section]
\newtheorem{proposition}[theorem]{Proposition}
\newtheorem{lemma}[theorem]{Lemma}
\newtheorem{corollary}[theorem]{Corollary}
\theoremstyle{definition}
\newtheorem{definition}[theorem]{Definition}
\newtheorem{example}[theorem]{Example}
\theoremstyle{remark}
\newtheorem{remark}[theorem]{Remark}
\tikzcdset{every diagram/.append style={row sep=large, column sep=large},
 every label/.append style={font=\small}, every cell/.append style={font=\small}}
\AtBeginDocument{%
 \renewenvironment{proof}[1][\proofname]{\par\pushQED{\qed}%
 \normalfont\topsep6pt\trivlist\item[\hskip\labelsep\itshape #1.]\ignorespaces%
 }{\popQED\endtrivlist\ignorespacesafterend}%
}
\newcommand{\ev}{\mathsf{ev}}
\newcommand{\Nat}{\mathsf{Nat}}

\newcommand{\code}[1]{\texttt{#1}}
\newcommand{\Ftwo}{\mathbb{F}_2}
\hypersetup{pdftitle={A Theory of a Two-Dimensional Typed Lambda Calculus}, pdfauthor={Daniel O. Martinez-Rivillas; Arthur F. Ramos; Ruy J. G. B. de Queiroz}, pdfsubject={Typed conversion evidence and certified naturality}}

\begin{document}

\title[A Theory of a Two-Dimensional Typed Lambda Calculus]{A Theory of a Two-Dimensional Typed Lambda Calculus}

\author[1]{\fnm{Daniel O.} \sur{Martínez-Rivillas}}
\author[2]{\fnm{Arthur F.} \sur{Ramos}}
\author[3]{\fnm{Ruy J. G. B.} \sur{de Queiroz}}

\affil[1]{\orgdiv{Departamento de Matemáticas}, \orgname{Universidad Militar Nueva Granada (UMNG)}, \orgaddress{\country{Colombia}}}
\affil[2]{\orgname{Microsoft}, \orgaddress{\country{USA}}}
\affil[3]{\orgdiv{Centro de Informática (CIn)}, \orgname{Universidade Federal de Pernambuco (UFPE)}, \orgaddress{\city{Recife}, \state{Pernambuco}, \country{Brazil}}}

\abstract{We present a typed two-dimensional $\lambda$-calculus whose equality evidence is \emph{computational}: a path between two terms is an explicit finite sequence of one-step conversions (the $\beta$- and $\eta$-contractions, the congruences, and the structural rules), and every property of paths is proved \emph{by recursion over that sequence}, with any step as a base case --- in deliberate contrast with Martin-L\"of type theory, where identity is generated by reflexivity alone and all properties go through the non-computational $J$-eliminator. The higher structure is imported from the $2\beta$- and $2\eta$-conversions of the theory of an arbitrary higher $\lambda$-model: we obtain 2-dimensional coherence laws, computable naturality of homotopies (via inductive homotopies and their explicit evaluations), a 2-dimensional path type with transport, and a parity invariant that proves the system consistent and \emph{really intensional}: the $\beta$- and $\eta$-contractions are provably distinct evidence, while in the native syntax of Idris (core MLTT) they are identified by definitional equality. Commutative diagrams accompany the main constructions, and the theory is fully formalized in Idris 2. A parallel Lean formalization is published in the Palomar registry \cite{palomar2026lean}. A philosophical reading closes the paper: constructivism in the BHK sense, proof-relevant intensionality, and the boundary between syntax and semantics, drawn relative to MLTT and HoTT.}

\keywords{Higher lambda calculus, computational paths, homotopy type theory, mathematical constructivism, intensionality, $\beta\eta$-conversion}
\maketitle
\raggedbottom
\allowdisplaybreaks[0]
\clubpenalty=10000
\widowpenalty=10000

\section{Introduction}\label{sec:intro}

\subsection{Motivation: what is evidence of equality?}

Consider two programs $f$ and $g$ of the same type $A \to B$. When may we say that $f$ and $g$ are \emph{equal}, and what is the \emph{evidence} for such a statement? In the ordinary practice of computation the answer is operational: two programs are equal when there is a chain of elementary \emph{conversions} transforming one into the other, such as the $\beta$-contraction
\[
(\lambda x.\, t)\, a \;\text{converts to}\; t[a/x],
\]
or the $\eta$-contraction
\[
\lambda x.\, t\, x \;\text{converts to}\; t .
\]
The evidence of equality is then the chain itself: a finite sequence of steps, which can be inspected, enumerated and manipulated. This is the idea of \emph{computational paths} put forward by de Queiroz and his collaborators \cite{deQueiroz2016}, and it is the point of departure of this paper.

The situation is different in Martin-L\"of type theory (MLTT), the standard foundation of dependently typed programming. There, for terms $a, b : A$ one forms the \emph{identity type} $\mathsf{Id}_A(a,b)$, whose canonical (and, for the intensional theory, the \emph{only}) constructor is reflexivity $\mathsf{refl}_a : \mathsf{Id}_A(a,a)$; every property of identity proofs must be derived from the $J$-eliminator, and it is enough --- in fact necessary --- to prove it in the single case $e \equiv \mathsf{refl}_a$. Three features of this design concern us here.

\begin{enumerate}[(1)]
  \item \textbf{Only one base case.} The induction principle of the identity type has \emph{reflexivity as its only base step}. A computation like a $\beta$-contraction is not itself a constructor of $\mathsf{Id}_A(a,b)$; it is merely the \emph{translation} of $\mathsf{refl}$ along the definitional equality of the ambient core. Nothing in the syntax of MLTT distinguishes ``$f$ equals $g$ because both reduce to $h$'' from ``$f$ equals $g$ because $g$ is $f$''. The computational content of equality is erased.
  \item \textbf{Definitional equality conflates evidence.} In the native core of Idris (which is MLTT), the two terms $(\lambda x.\, u\, x)\, a$ and $u\, a$ are \emph{definitionally equal}. Thus the $\eta$-contraction of $u$ and the $\beta$-contraction of $u$ at $a$ are identified by the theory, even though they are witnessed by different rewriting steps.
  \item \textbf{Non-computational elimination.} The $J$-eliminator is a \emph{postulate} of the theory, not an algorithm. It gives no procedure that, from an identity witness, produces the transformations performed; semantic functions (arbitrary maps $A \to B$) escape it entirely.
\end{enumerate}

This paper develops an alternative: a typed two-dimensional $\lambda$-calculus in which the evidence of equality is an explicit object, a \emph{path} built from one-step conversions, and in which every property of paths is proved \emph{by recursion over the sequence of steps}, with \emph{any} step --- $\beta$, $\eta$, a congruence, or a structural rule --- available as a base case. The higher (2-dimensional) structure is imported from the $2\beta$- and $2\eta$-conversions of the theory of an arbitrary higher $\lambda$-model \cite{martinez2023arbitrary}, whose equality theory subsumes that of homotopic $\lambda$-models and of extensional Kan complexes \cite{martinez2023groupoid,martinez2022domain}.

\subsection{Relationship to higher lambda models}\label{sec:related}

Martínez-Rivillas and de Queiroz's theory of an arbitrary higher $\lambda$-model introduces higher conversion evidence and studies its interpretation in extensional Kan complexes \cite{martinez2023arbitrary}. The beta and eta naturality squares on pp.~48--49 motivate our typed generators. Their Example~2.14 already contrasts beta- and eta-based routes. 

The $K_\infty$ homotopy $\lambda$-model supplies a concrete semantic construction and studies higher beta/eta conversions, including the comparison in Example~4.16 \cite{martinez2026kinfty}. Recursive completion in higher models develops a further syntactic account and a Lean~4 formalization \cite{martinez2026recursive}. Its Remark~8.4 restricts one separation analysis to a subsystem with forward generators, reflexivity and composition; its Theorem~8.7 concerns a canonical equality tower.

Our contribution relative to these three works is the combination of (i)~a typed  version of the $2\beta\eta$-contractions of \cite{martinez2023arbitrary}, (ii)~a recursive evaluator for an inductive homotopy language that constructs computational naturality certificates by means of the $2\beta\eta$-contractions, (iii)~direct and conjugation-based naturality witnesses, (iv)~a parity invariant checked against every constructor of the present $\StepTwo$, including inverse, congruence and whiskering rules, the parity invariant proves consistency and hence the intensionality theorem: the $\beta$- and $\eta$-contractions are distinct evidence in our system, while they are definitionally identified in MLTT, and (v)~ a philosophical reading: the theory as mathematical constructivism --- BHK-style evidence, proof-relevant intensionality, and the constructive boundary between syntax and semantics --- positioned with respect to MLTT and HoTT. 

\subsection{Scope and method}\label{sec:philosophy}

The theory has three levels of evidence: $\Step$ and $\Path$ record
conversions between MLTT/Idris host terms, $\StepTwo$ and $\PathTwo$ record
cells between parallel paths in our two-dimensional theory, and native MLTT
equality is used only for metatheoretic results such as the Boolean invariant.
Thus definitionally equal endpoints may still carry distinct path data. A
pointwise homotopy is semantic evidence, whereas an inductive presentation
exposes a derivation for recursion; the certified evaluators supply
naturality only for the latter.

Here ``two-dimensional'' concerns terms, paths and cells, not the order of
function types. The formalization uses MLTT/Idris host functions rather than
a separate syntax with binding and substitution, so it proves neither
normalization nor completeness for a standalone calculus nor all bicategorical
or infinity-groupoid equations. Sections~\ref{sec:prelim}--\ref{sec:algebra}
give the presentation and path algebra, Sections~\ref{sec:homotopy}--\ref{sec:path2}
the certified naturality and transport constructions, Section~\ref{sec:consistency}
the non-collapse result, and Section~\ref{sec:philosophical} the philosophical
interpretation; Appendix~\ref{sec:appendix} relates them to the code.

\section{Typed conversion evidence}\label{sec:prelim}

We work over MLTT/Idris host types $A,B,\ldots$ and MLTT/Idris host terms $x:A$, with ordinary function types $A\to B$. All displayed rules are schematic in these types. The implementation quantifies over Idris \code{Type}; it does not restrict endpoints to a freely generated simply typed term language. The lambda-calculus notation recalls the usual beta and eta conversions \cite{barendregt1984}, but the evidence of these conversions is represented separately from the endpoint expressions.

Write $\defineq$ for MLTT/Idris host definitional equality, $\lembed s$ for a singleton path, and $p\concatop q$ for traversal of $p$ followed by $q$. When more than two paths are composed we show parentheses where the chosen association matters. A square in a diagram abbreviates an inhabitant of a specified $\StepTwo$ or $\PathTwo$; it does not assert literal equality of the boundary paths.

\subsection{Judgemental conversion and labelled evidence}\label{sec:judgemental}

For comparison, take the presentation of MLTT with judgemental beta and function eta in \cite[Secs.~1.2--1.3]{hottbook}. These are rules establishing judgements such as $\Gamma\vdash M\defineq N:B$, rather than constructors of distinct identity terms labelled beta and eta. Identity introduction and type conversion then permit
\[
\frac{\Gamma\vdash M\defineq N:B}
     {\Gamma\vdash\mathsf{refl}^{\mathrm{Id}}_M:\mathrm{Id}_B(M,N)}.
\]
This witness records no choice of derivation of the premise. Thus two conversion derivations may have the same judgement as conclusion and supply the same reflexivity witness; this is not an equation between derivation objects internal to MLTT. The qualification on eta matters: we compare with a presentation in which function eta is judgemental, without presuming that convention in every variant of type theory.

The $\Beta$ and $\Eta$ constructors instead make conversion justifications into indexed data. Definitional equality of their endpoints does not erase their tags. The distinction becomes mathematically substantive when it survives the declared relations between paths, as proved in Corollary~\ref{cor:contra2}. The implementation retains the MLTT/Idris host's own conversion judgement; it adds a separate language of evidence rather than replacing that judgement.

\subsection{Steps}\label{sec:steps}

\begin{definition}[Conversion steps]\label{def:step}
The indexed family $\Step(x,y)$ has seven constructors:
\begin{align*}
\Beta(f,a)&:\Step((\lambda x.\,f\,x)\,a,f\,a),
 &\Eta(f)&:\Step(\lambda x.\,f\,x,f),\\
\ApCong(f,s)&:\Step(fx,fy) &&(s:\Step(x,y)),\\
\LamCong(h)&:\Step(\lambda x.\,u\,x,\lambda x.\,v\,x)
 &&(h:(x:A)\to\Step(ux,vx)),\\
\refl_x&:\Step(x,x),\\
\sym(s)&:\Step(y,x) &&(s:\Step(x,y)),\\
\trans(s,t)&:\Step(x,z) &&(s:\Step(x,y),\ t:\Step(y,z)).
\end{align*}
\end{definition}

Unlike MLTT, where propositional equality is freely generated by reflexivity (and eliminated by $J$), the steps above are \emph{computational}: Beta and Eta are actual contractions of terms, ApCong and LamCong are congruences, and Refl, Sym, Trans are the meta-structural closure. The step relation is therefore a syntax-directed rewriting relation on terms, not an inductively generated equality. Two consequences deserve emphasis, as they will shape the whole paper.

\begin{enumerate}[(a)]
	\item \textbf{Every step carries its own meaning.} A $\beta$-step is not an $\eta$-step, and neither is a reflexivity step; the type $\Step(x,y)$ keeps them apart. In MLTT all identity witnesses of $\mathsf{Id}_A(a,a)$ are forced to be $\mathsf{refl}$, so this information cannot even be expressed.
	\item \textbf{The rules are reversible but not symmetric.} Symmetry is added \emph{explicitly} as the structural constructor $\sym$: the step from $f\, x$ back to $(\lambda y.\, f\, y)\, x$ is $\sym(\Beta(f,x))$, a genuinely different step from $\Beta(f,x)$ itself. Section~\ref{sec:consistency} exploits precisely this asymmetry, via the parity invariant, to prove that $\beta$- and $\eta$-evidence can never be identified.
\end{enumerate}

\subsection{Paths and their operations}\label{sec:paths}

\begin{definition}[Paths]\label{def:path}
For $x,y:A$, a path is a finite outer sequence of steps:
\[
\Nil_x:\Path(x,x),\qquad
\Seq(s,p):\Path(x,z)\quad(s:\Step(x,y),\ p:\Path(y,z)).
\]

A path is thus a finite sequence of steps. Pictorially, a path $p : \Path(x,y)$ is a directed chain
\begin{center}
	\begin{tikzcd}[row sep=small, column sep=small]
		x = x_0 \arrow[r, "s_1"] & x_1 \arrow[r, "s_2"] & x_2 \arrow[r] & \cdots \arrow[r, "s_n"] & x_n = y
	\end{tikzcd}
\end{center}
\noindent where each label $s_i$ is a step of the type $\Step$ --- a $\beta$, an $\eta$, a congruence, or a structural rule --- and the endpoint terms $x_i$ are part of the datum; the empty sequence $\Nil_x$ is the path of length zero. Notice the methodological inversion with respect to MLTT: there, one proves a property of $e : \mathsf{Id}_A(x,y)$ by reducing $e$ to $\mathsf{refl}_x$; here, one unfolds the chain $s_1, \dots, s_n$, and \emph{every step type} contributes a case --- a property may be checked at any step, not only at the trivial one.
\end{definition}

\begin{definition}[Operations]\label{def:pathops}
Concatenation, inversion, congruence and compression are given by structural recursion:
\begin{align*}
\Nil\concatop q&=q,&\Seq(s,p)\concatop q&=\Seq(s,p\concatop q),\\
\lembed s&=\Seq(s,\Nil),\\
\Nil\invop&=\Nil,&\Seq(s,p)\invop&=p\invop\concatop\lembed{\sym(s)},\\
\ap_f(\Nil)&=\Nil,&\ap_f(\Seq(s,p))&=\Seq(\ApCong(f,s),\ap_f(p)),\\
\pathToStep(\Nil)&=\refl,&\pathToStep(\Seq(s,p))&=\trans(s,\pathToStep(p)).
\end{align*}
They have the expected endpoint types: inversion reverses endpoints, $\ap_f$ maps them through $f$, and $\pathToStep:\Path(x,y)\to\Step(x,y)$.
\end{definition}

Induction on $\Path$ has two cases, $\Nil$ and $\Seq(s,p)$. A further induction on $s$ may expose any of the seven step constructors.

\begin{definition}[Function paths]\label{def:etaexpand}\label{def:funpath}
Let $\EtaExpand(t)=\lambda x.\,t\,x$, $\IdFun(t)=t$, and $\etaPath(f)=\lembed{\Eta(f)}$. For a pointwise family $h:(x:A)\to\Path(fx,gx)$, put
\begin{align*}
\lamCongPath(h)&=\lembed{\LamCong(\lambda x.\,\pathToStep(hx))},\\
\funPath(f,g,h)&=\etaPath(f)\invop\concatop
                   (\lamCongPath(h)\concatop\etaPath(g)):\Path(f,g).
\end{align*}
\end{definition}

This gives a path of functions in our two-dimensional relation. No law identifying application of this path with the original family $h$ is asserted. Also, $\EtaExpand$ is just a definition in the MLTT/Idris host: expanding it in $\ap_{\EtaExpand}(p)$ produces no new beta or eta evidence. Explicit constructors, not alternative spellings of endpoint functions, carry that evidence.

\section{Two-dimensional structure: Step2}\label{sec:step2}

\begin{definition}[Step2: 2-cells between paths]\label{def:step2}
For $p, q : \Path(x,y)$, the type $\StepTwo(p,q)$ of \emph{2-cells} from $p$ to $q$ is generated by the following rules.

\smallskip
\noindent\textbf{Reduction squares (typed 2-beta and 2-eta, after \cite{martinez2023arbitrary}).} The beta and eta rules are typed analogues motivated by the model-theoretic squares of \cite[Cor.~2.18]{martinez2023arbitrary}; the remaining rules specify the structural closure chosen for this presentation. Each one is a \emph{square}: two parallel paths (top-then-right, and left-then-bottom) with a 2-cell filling the square. We draw each square as a commutative diagram; the label in the middle is the name of the 2-cell.
\begin{enumerate}[(i)]
  \item \textbf{Reflexivity commutation} $\mathsf{CRefl}^{\mathsf{step}}_2(s)$, for $s : \Step(x,y)$:
  \[ \mathsf{CRefl}^{\mathsf{step}}_2(s) : \StepTwo(\lembed{\refl} \concatop \lembed(s),\ \lembed(s) \concatop \lembed{\refl}). \]
  \item \textbf{2-Beta} $\BetaTwo(s)$, for $u : A \to B$, $s : \Step(a,b)$:
  \begin{multline*}
   \BetaTwo(s) : \StepTwo\bigl(\ap_{\lambda x. u x}(\lembed(s)) \concatop \lembed(\Beta(u,b)),\\
   \lembed(\Beta(u,a)) \concatop \ap_u(\lembed(s))\bigr).
  \end{multline*}
  \begin{center}
  \begin{tikzcd}[column sep=large, row sep=large]
  (\lambda x.\, u\, x)\, a
    \arrow[r, "{\ap_{\lambda x. u x}(\lembed s)}"]
    \arrow[d, "{\lembed(\Beta(u,a))}"']
    \arrow[dr, phantom, "{\BetaTwo(s)}" {description, font=\footnotesize}, color=blue]
  & (\lambda x.\, u\, x)\, b \arrow[d, "{\lembed(\Beta(u,b))}"] \\
  u\, a \arrow[r, "{\ap_u(\lembed s)}"']
  & u\, b
  \end{tikzcd}
  \end{center}
  Reading the square top-right: the $\beta$-contraction of $u$ \emph{after} pushing $s$ through the $\eta$-expanded function; bottom-left: pushing $s$ through $u$ \emph{after} the contraction. $\BetaTwo(s)$ is our typed analogue of the higher $2\beta$ square: it commutes $\beta$ with the congruence of $u$. In the Idris source this generator is \code{Beta2Step}; its displayed orientation is already the common naturality orientation.
  \item \textbf{2-Eta (step generator)} $\EtaTwo(s)$, for $s : \Step(u,v)$:
  \begin{multline*}
  \EtaTwo(s) : \StepTwo\bigl(\ap_{\EtaExpand}(\lembed(s))\concatop\lembed(\Eta(v)),\\
                         \lembed(\Eta(u))\concatop\lembed(s)\bigr),
  \end{multline*}
  The path-level square for an arbitrary $p$ is derived below by induction; it
  is not an additional primitive constructor of $\StepTwo$.
  \begin{center}
  \begin{tikzcd}[column sep=large, row sep=large]
  \lambda x.\, u\, x
    \arrow[r, "{\ap_{\lambda t. \lambda x. t x}(\lembed s)}"]
    \arrow[d, "{\lembed(\Eta(u))}"']
    \arrow[dr, phantom, "{\EtaTwo(s)}" {description, font=\footnotesize}, color=blue]
  & \lambda x.\, v\, x \arrow[d, "{\lembed(\Eta(v))}"] \\
  u \arrow[r, "{\lembed s}"']
  & v
  \end{tikzcd}
  \end{center}
  Here the top and bottom arrows live \emph{in the type $A \to B$}: the top edge applies the eta-expansion congruence to the step $s : \Step(u,v)$, the bottom edge is $s$ itself, and the vertical edges are the $\eta$-contractions; the 2-cell commutes $\eta$ with the congruence. In the Idris source this generator is \code{Eta2Step}, with the same naturality orientation; the path-level proof \code{eta2} invokes it directly.
  \item \textbf{Naturality of Lambda congruence}. For a presented homotopy $h^I:\HomotopyStepI(u,v)$ (Definition~\ref{def:homotopy}), let $h=\llbracket h^I\rrbracket:(x:A)\to\Step(u\,x,v\,x)$ be its structural evaluation. For $s : \Step(a,b)$, put $L_t=\lembed(\ApCong(\lambda z.\,zt,\LamCong(h)))$. Then
  \begin{multline*}
   \LamCongTwo(h^I,s) :
   \StepTwo\bigl(\ap_{\lambda x.\,u x}(\lembed s)\concatop L_b,\,
   L_a\concatop\ap_{\lambda x.\,v x}(\lembed s)\bigr).
  \end{multline*}
  \begin{center}
  \begin{tikzcd}[column sep=large, row sep=large]
  (\lambda x.\,u\,x)\,a
    \arrow[r, "{\ap_{\lambda x.\,u x}(\lembed s)}"]
    \arrow[d, "{L_a}"']
    \arrow[dr, phantom, "{\LamCongTwo(h^I,s)}" {description, font=\footnotesize}, color=blue]
  & (\lambda x.\,u\,x)\,b \arrow[d, "{L_b}"] \\
  (\lambda x.\,v\,x)\,a \arrow[r, "{\ap_{\lambda x.\,v x}(\lembed s)}"']
  & (\lambda x.\,v\,x)\,b
  \end{tikzcd}
  \end{center}
  The vertical edges evaluate the lambda-congruence $\LamCong(\llbracket h^I\rrbracket)$ at $a$ and $b$. Requiring $h^I$ makes the global inductive derivation part of this generator; its direction uses the displayed eta-expanded functions on both boundaries.
\end{enumerate}

\smallskip
\noindent\textbf{Structural cell generators.}
\begin{enumerate}[(i)]
  \item \textbf{Reflexivity} $\mathsf{Refl}_2 : \StepTwo(p,p)$; \textbf{symmetry} $\sym_2 : \StepTwo(p,q) \to \StepTwo(q,p)$; \textbf{transitivity} $\trans_2 : \StepTwo(p,q) \to \StepTwo(q,r) \to \StepTwo(p,r)$.
  \item \textbf{Structural coherence} (a meta-step represents the corresponding composite):
  \begin{align*}
    \mathsf{ReflStep}_2 &: \StepTwo(\lembed(\refl_x), \Nil), \\
    \sym\mathsf{Step}_2(s) &: \StepTwo(\lembed(\sym(s)), \lembed(s)\invop), \\
    \trans\mathsf{Step}_2(s,t) &: \StepTwo(\lembed(\trans(s,t)), \lembed(s) \concatop \lembed(t)).
  \end{align*}
  \item \textbf{Elementary cancellation}
  \begin{align*}
    \leftInv_2(s) &: \StepTwo(\lembed(s)\invop \concatop \lembed(s),\ \Nil), \\
    \rightInv_2(s) &: \StepTwo(\lembed(s) \concatop \lembed(s)\invop,\ \Nil), \\
    \sym\sym_2(s) &: \StepTwo(\lembed(\sym(\sym(s))),\ \lembed(s)).
  \end{align*}
  \item \textbf{Functoriality of congruence}, with $\alpha:\StepTwo(p,q)$:
  \begin{align*}
    \ap\mathsf{CongIdStep}_{2}(s) &: \StepTwo(\ap_{\lambda t. t}(\lembed(s)),\ \lembed(s)),\\
    \ap\mathsf{CongStep}_2(u,\alpha) &: \StepTwo(\ap_u(p),\ap_u(q)).
  \end{align*}
  \begin{multline*}
    \ap\mathsf{CongComposeStep}_{2}(u,f,s):
    \StepTwo\bigl(\ap_u(\ap_f(\lembed(s))),\\
                      \ap_{\lambda z. u(fz)}(\lembed(s))\bigr).
  \end{multline*}
  \item \textbf{Elementary left and right extensions}
  \[
  \mathsf{Whisk}_{L}\mathsf{Step}_{2}(\alpha) : \StepTwo(\Seq(s,p), \Seq(s,q)) \quad\text{for } \alpha : \StepTwo(p,q),
  \]
  \[
  \mathsf{Whisk}_{R}\mathsf{Step}_{2}(\alpha, s) :
  \StepTwo(p \concatop \lembed(s), q \concatop \lembed(s)).
  \]
\end{enumerate}
\end{definition}

\begin{remark}[Induction over rules and sequences]\label{rem:induction}
The proofs distinguish induction on a derivation from induction on a sequence of derivations. For $\Step$, beta, eta and reflexivity have no recursive $\Step$ premises; congruence, symmetry and transitivity do. Their induction cases therefore receive hypotheses for the constituent evidence, pointwise in the lambda-congruence case. Similarly, induction on $\StepTwo$ checks each declared square and treats recursive cell constructors such as $\sym_2$, $\trans_2$ and whiskering using their induction hypotheses.

Sequence induction has the schematic clauses
\begin{align*}
F(\Nil)&=d,&F(\Seq(s,p))&=c(s,p,F(p)),\\
G(\Nil_2)&=d_2,&G(\SeqTwo(\alpha,R))&=c_2(\alpha,R,G(R)).
\end{align*}
Here endpoint indices are implicit and $d,c,d_2,c_2$ have the corresponding dependent types. The sequence clause can use a previously established result for every step or cell; it need not inspect that entry again. Thus ``checking the basic rules and extending to sequences'' describes a layered proof strategy, rather than making every rule a non-recursive base constructor. The parity argument in Section~\ref{sec:consistency} exemplifies this strategy. In particular, the arbitrary-path 2-eta square is derived by the coherence construction in Theorem~\ref{thm:eta2}; only $\EtaTwo$ is a primitive reduction square.
\end{remark}

\begin{remark}[Design choice]\label{rem:whisk}
These rules generate witnesses, rather than equations between all witnesses. The elementary left and right extensions, $\mathsf{Whisk}_{L}\mathsf{Step}_{2}$ and $\mathsf{Whisk}_{R}\mathsf{Step}_{2}$, are primitive. Whiskering by arbitrary paths is derived recursively; together they yield horizontal composition without making that composition primitive (cf.\ Proposition~\ref{prop:trans2horizontal}).
\end{remark}

\section{The algebra of paths: coherence laws}\label{sec:algebra}

The following constructions use structural recursion and the declared cell generators. Similar groupoid operations on computational paths have established antecedents \cite{deveras2025groupoid}; the purpose here is to identify exactly which witnesses this presentation supports. Equations marked $\defineq$ arise from evaluation of MLTT/Idris host definitions. Other transformations are witnessed by cells and cannot silently be used as definitional equalities.

\begin{lemma}[Cancellation of adjacent steps]\label{lem:cancel}
For $s : \Step(x,y)$ and $q : \Path(y,z)$:
\[
\cancelL(s,q) : \StepTwo\bigl(\lembed(s)\invop \concatop \Seq(s,q),\ q\bigr).
\]
\end{lemma}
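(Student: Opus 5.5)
The plan is to reduce the statement to the primitive cancellation cell $\leftInv_2(s)$ followed by right whiskering with the arbitrary path $q$. The first step is to unfold the source by the defining clauses of Definition~\ref{def:pathops}. We have $\lembed(s)\invop \defineq \Seq(s,\Nil)\invop \defineq \Nil\invop\concatop\lembed(\sym(s)) \defineq \lembed(\sym(s))$. Hence
\[
\lembed(s)\invop\concatop\Seq(s,q)\;\defineq\;\Seq(\sym(s),\Seq(s,q)).
\]
In the same way, $(\lembed(s)\invop\concatop\lembed(s))\concatop q \defineq \Seq(\sym(s),\Seq(s,\Nil\concatop q)) \defineq \Seq(\sym(s),\Seq(s,q))$, since concatenation recurses on its left argument. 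So the source of the desired cell is definitionally the source of $\leftInv_2(s)$ with $q$ appended on the right, and the target $\Nil\concatop q$ is definitionally $q$. It therefore suffices to construct a right whiskering $\whiskR(\alpha,q):\StepTwo(p\concatop q,\,p'\concatop q)$ for every $\alpha:\StepTwo(p,p')$, and then set $\cancelL(s,q)=\whiskR(\leftInv_2(s),q)$.

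Only the single-step extension $\mathsf{Whisk}_R\mathsf{Step}_2$ is primitive, so $\whiskR$ has to be built recursively. I will first establish two auxiliary cells by induction on the left path, where concatenation computes. The first is the associator $\assoc(p,q,r):\StepTwo((p\concatop q)\concatop r,\,p\concatop(q\concatop r))$. Its $\Nil$ case is $\mathsf{Refl}_2$, because both sides reduce to $q\concatop r$. Its $\Seq(s,p_1)$ case is $\mathsf{Whisk}_L\mathsf{Step}_2$ applied to the induction hypothesis. The second is the right unitor $\rightId(p):\StepTwo(p\concatop\Nil,\,p)$, which is constructed in the same way.

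Next I define $\whiskR(\alpha,q)$ by induction on $q$. For $q=\Nil$, I compose $\rightId(p)$, then $\alpha$, then $\sym_2(\rightId(p'))$ using $\trans_2$. For $q=\Seq(t,q')$, I note that $\lembed(t)\concatop q'\defineq\Seq(t,q')$. I then form the chain
\[
p\concatop\Seq(t,q') \leftarrow (p\concatop\lembed t)\concatop q' \to (p'\concatop\lembed t)\concatop q' \to p'\concatop\Seq(t,q').
\]
The first arrow is $\sym_2(\assoc(p,\lembed t,q'))$. The middle arrow is the induction hypothesis applied to $\mathsf{Whisk}_R\mathsf{Step}_2(\alpha,t)$; this is legitimate because the recursion is on $q'$, while the cell argument varies. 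The last arrow is $\assoc(p',\lembed t,q')$. These are glued by $\trans_2$.

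I expect the main obstacle to be this right whiskering by an arbitrary path. Concatenation does not compute on its right argument, so the step case needs the associator cells as explicit witnesses; they cannot be treated as definitional equalities, in keeping with the convention of Section~\ref{sec:algebra}. I also need to check carefully that the endpoint indices line up in the Idris-style recursion, in particular that the recursive call is on the structurally smaller $q'$ with a new cell argument. Once $\whiskR$ is available, the lemma follows immediately from the definitional computation in the first paragraph.
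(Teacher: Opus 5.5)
Your proposal is correct and follows the paper's proof: the paper also sets $\cancelL(s,q)=\whiskR(\leftInv_2(s),q)$. It then observes that the boundaries $(\lembed s\invop\concatop\lembed s)\concatop q$ and $\Nil\concatop q$ reduce definitionally to $\lembed s\invop\concatop\Seq(s,q)$ and $q$. Your recursive construction of $\whiskR$ from the associator and right-unit cells is the one the paper gives later in Definition~\ref{def:whisk}, so writing it out simply makes explicit the forward reference the paper relies on.
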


\begin{proof}
Right-whisker the elementary left-inverse cell by $q$:
\[
\whiskR(\leftInv_2(s),q):
\StepTwo((\lembed s\invop\concatop\lembed s)\concatop q,\Nil\concatop q).
\]
Because both singleton paths have explicit constructors, these boundaries reduce to $\lembed s\invop\concatop\Seq(s,q)$ and $q$. The derived operation $\whiskR$ performs the required recursion on $q$, so this lemma needs no separate induction.
\end{proof}

\begin{lemma}[Distributivity of concatenation]\label{lem:distrib}
For $s : \Step(x,y)$, $p : \Path(y,z)$, $q : \Path(z,w)$:
\[
\mathsf{distrib}(s,p,q) : \StepTwo((\Seq(s,p)) \concatop q,\ \Seq(s, p \concatop q)),
\]
which holds definitionally ($\mathsf{Refl}_2$), since concatenation is defined by recursion on its first argument.
\end{lemma}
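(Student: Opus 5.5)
The plan is to obtain the witness directly from the defining clauses of concatenation in Definition~\ref{def:pathops}, without any induction on $p$ or $q$ and without appealing to any declared cell generator beyond reflexivity. Concatenation is defined by structural recursion on its \emph{first} argument. The left-hand side $(\Seq(s,p))\concatop q$ therefore has a head constructor matching the second clause, $\Seq(s,p)\concatop q=\Seq(s,p\concatop q)$, and it unfolds in one evaluation step. This unfolding is a definitional equality $\defineq$ of the MLTT/Idris host, in the sense fixed at the start of Section~\ref{sec:algebra}. In particular, it is not a transformation witnessed by a cell, so it may legitimately be used silently.

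The key steps, in order, are as follows.
\begin{enumerate}[(1)]
\item Check the endpoint indices. We have $s:\Step(x,y)$ and $p:\Path(y,z)$, so $\Seq(s,p):\Path(x,z)$. With $q:\Path(z,w)$, both $(\Seq(s,p))\concatop q$ and $\Seq(s,p\concatop q)$ lie in $\Path(x,w)$. Hence the type $\StepTwo(\cdot,\cdot)$ in the statement is well formed, and the two boundaries are parallel paths.
\item Observe that the two boundaries are definitionally equal by the second clause of $\concatop$, so the target type is convertible to $\StepTwo(\Seq(s,p\concatop q),\Seq(s,p\concatop q))$.
\item Inhabit this type with the structural generator $\mathsf{Refl}_2$ at the path $\Seq(s,p\concatop q)$, using the host's type conversion to transport it back to the stated type.
\end{enumerate}

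No recursion on $s$ is needed, since the step is never inspected. No recursion on $p$ is needed either, because the recursive call $p\concatop q$ is left unevaluated and appears identically on both sides.

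There is essentially no obstacle. The only point deserving care is orientation. The lemma's statement takes the concatenation to be recursive on the first argument, which is exactly how Definition~\ref{def:pathops} is written. If concatenation had instead been defined by recursion on $q$, the boundaries would not be convertible, and one would need an induction on $q$ using $\mathsf{Whisk}_{L}\mathsf{Step}_{2}$. Under the definitions as given, the reflexivity witness suffices. In the Idris formalization I would expect the boundaries to unify by normalization, with no rewriting required.
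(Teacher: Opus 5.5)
Your proposal is correct and takes the paper's own approach. The second defining clause of $\concatop$ makes $(\Seq(s,p))\concatop q$ definitionally equal to $\Seq(s,p\concatop q)$, so $\mathsf{Refl}_2$ inhabits the stated type by host conversion, with no induction on $s$, $p$ or $q$.
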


\begin{theorem}[Total 2-dimensional inverses]\label{thm:inv2}
For every path $p : \Path(x,y)$:
\[
\leftInv_2(p) : \StepTwo(p\invop \concatop p,\ \Nil), \qquad
\rightInv_2(p) : \StepTwo(p \concatop p\invop,\ \Nil).
\]
\end{theorem}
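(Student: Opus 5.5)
We argue by induction on the path $p$, following the sequence-induction schema of Remark~\ref{rem:induction}. In each step-case the elementary generators $\leftInv_2(s)$ and $\rightInv_2(s)$ and the derived whiskerings $\whiskL$ and $\whiskR$ (recursive extensions of $\mathsf{Whisk}_{L}\mathsf{Step}_{2}$ and $\mathsf{Whisk}_{R}\mathsf{Step}_{2}$, as already used in Lemma~\ref{lem:cancel}) will do the work. Cells are glued together with $\trans_2$.

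\textbf{Base case.} For $p=\Nil$ we have $\Nil\invop\concatop\Nil\defineq\Nil$ and $\Nil\concatop\Nil\invop\defineq\Nil$, so both cells are $\mathsf{Refl}_2$.

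\textbf{Inductive case for $\rightInv_2$.} Let $p=\Seq(s,p')$ with $s:\Step(x,y)$ and $p':\Path(y,z)$. By definition,
\[
p\concatop p\invop\defineq\Seq\bigl(s,\,p'\concatop(p'^{-1}\concatop\lembed{\sym(s)})\bigr).
\]
The plan is the chain
\[
\Seq(s,p'\concatop(p'^{-1}\concatop\lembed{\sym s}))
\Rightarrow\Seq(s,(p'\concatop p'^{-1})\concatop\lembed{\sym s})
\Rightarrow\Seq(s,\lembed{\sym s})
\Rightarrow\lembed s\concatop\lembed s\invop
\Rightarrow\Nil .
\]
\begin{enumerate}[(1)]
  \item The first arrow is an associativity cell, left-whiskered by $s$ using $\mathsf{Whisk}_{L}\mathsf{Step}_{2}$. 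I expect to build $\assoc$ first by induction on its first argument; its $\Nil$ case is $\mathsf{Refl}_2$, and its $\Seq$ case is a left whisker of the induction hypothesis.
  \item The second arrow is $\mathsf{Whisk}_{R}\mathsf{Step}_{2}(\rightInv_2(p'),\sym s)$, left-whiskered by $s$. Here the boundary $\Nil\concatop\lembed{\sym s}$ is definitionally $\lembed{\sym s}$.
  \item The third arrow comes from $\sym\mathsf{Step}_2(s)$ left-whiskered by $s$, since $\lembed s\concatop\lembed s\invop\defineq\Seq(s,\lembed s\invop)$. Note that $\lembed s\invop\defineq\lembed{\sym s}$ already holds definitionally, because $\Nil\invop\concatop\lembed{\sym s}\defineq\lembed{\sym s}$. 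If so, this arrow is simply $\mathsf{Refl}_2$.
  \item The last arrow is the elementary $\rightInv_2(s)$.
\end{enumerate}

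\textbf{Inductive case for $\leftInv_2$.} For the same $p=\Seq(s,p')$,
\[
p\invop\concatop p\defineq(p'^{-1}\concatop\lembed{\sym s})\concatop\Seq(s,p').
\]
The plan is the chain
\[
(p'^{-1}\concatop\lembed{\sym s})\concatop\Seq(s,p')
\Rightarrow p'^{-1}\concatop(\lembed{\sym s}\concatop\Seq(s,p'))
\Rightarrow p'^{-1}\concatop p'
\Rightarrow\Nil .
\]
\begin{enumerate}[(1)]
  \item The first arrow is $\assoc$.
  \item The second arrow is the left whisker by $p'^{-1}$ of $\cancelL(s,p')$ from Lemma~\ref{lem:cancel}. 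This uses the identification $\lembed{\sym s}\defineq\lembed s\invop$ noted above.
  \item The third arrow is the induction hypothesis $\leftInv_2(p')$.
\end{enumerate}

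\textbf{Expected obstacle.} The main difficulty is not the induction itself but the derived infrastructure: associativity cells, and whiskering by an arbitrary path, which the paper describes as recursive. In particular, $\whiskL$ by a general $r$ must be defined by induction on $r$, iterating $\mathsf{Whisk}_{L}\mathsf{Step}_{2}$. The endpoint-index bookkeeping, that is, making each boundary reduce definitionally to the next, must be checked carefully; I would establish these helper lemmas before the theorem.
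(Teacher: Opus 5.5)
Your proposal is correct and is essentially the paper's proof: recursion on $p$ with $\mathsf{Refl}_2$ for $\Nil$. The left-inverse case pastes $\assoc_2$, $\whiskL({p'}\invop,\cancelL(s,p'))$ and $\leftInv_2(p')$. The right-inverse case reassociates inside the head frame $\lembed s$, applies the recursive $\rightInv_2(p')$ framed by $\lembed s$ and $\lembed{\sym(s)}$, and closes with the elementary $\rightInv_2(s)$. The one detail to make explicit is that your first right-inverse arrow is the reversed cell $\sym_2(\assoc_2(p',{p'}\invop,\lembed{\sym(s)}))$, which is the paper's ``symmetric associativity cell''; and since $\lembed s\invop\defineq\lembed{\sym(s)}$, your third arrow is indeed just $\mathsf{Refl}_2$.
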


\begin{proof}
Both by recursion on $p$. For $p = \Nil$ both 2-cells are $\mathsf{Refl}_2$. For $p = \Seq(s,p')$ we have, by the defining recursions of $\invop$ and $\concatop$,
\[
p\invop \concatop p \;=\; \bigl({p'}\invop \concatop \lembed(s)\invop\bigr) \concatop \Seq(s,p'),
\]
and the pasting
\begin{multline*}
\bigl({p'}\invop \concatop \lembed(s)\invop\bigr) \concatop \Seq(s,p')
\;\xrightarrow{\ \assoc_2\ }\;
{p'}\invop \concatop \bigl(\lembed(s)\invop \concatop \Seq(s,p')\bigr) \\
\;\xrightarrow{\ \whiskL({p'}\invop,\ \cancelL(s,p'))\ }\;
{p'}\invop \concatop p'
\;\xrightarrow{\ \leftInv_2(p')\ }\;\Nil .
\end{multline*}
For the right inverse, first reassociate
\[
\Seq(s,p')\concatop\Seq(s,p')\invop
\quad\text{to}\quad
\lembed s\concatop((p'\concatop {p'}\invop)\concatop\lembed s\invop).
\]
This uses a symmetric associativity cell inside the head frame, not merely unfolding. Apply the recursive right-inverse cell to $p'$, framed by the two singleton paths, and close with the elementary right-inverse cell for $s$.
\end{proof}

\begin{theorem}[2-dimensional identities]\label{thm:id2}
For every path $p : \Path(x,y)$:
\[
\leftId_2(p) : \StepTwo(\Nil \concatop p,\ p), \qquad
\rightId_2(p) : \StepTwo(p \concatop \Nil,\ p).
\]
\end{theorem}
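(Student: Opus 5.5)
The left identity holds definitionally, so $\leftId_2(p)$ is immediate; the right identity needs a structural recursion on $p$, using only $\mathsf{Refl}_2$ and the elementary left extension.

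For $\leftId_2(p)$ I would unfold the definition of concatenation. Since $\concatop$ is defined by recursion on its first argument with the clause $\Nil\concatop q=q$, the source boundary $\Nil\concatop p$ reduces to $p$ by evaluation of the host definition. Hence the two boundaries agree up to $\defineq$, and $\leftId_2(p)\coloneqq\mathsf{Refl}_2$ needs no induction. This is the same mechanism as Lemma~\ref{lem:distrib}.

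For $\rightId_2(p)$ no such unfolding is available, because $p\concatop\Nil$ is stuck on the variable $p$. I would proceed by recursion on $p$, following the two-clause scheme of Remark~\ref{rem:induction}.
\begin{itemize}
\item In the case $p=\Nil$, the source $\Nil\concatop\Nil$ reduces to $\Nil$, so $\rightId_2(\Nil)\coloneqq\mathsf{Refl}_2$.
\item In the case $p=\Seq(s,p')$, the defining clause $\Seq(s,p')\concatop q=\Seq(s,p'\concatop q)$ rewrites the source definitionally to $\Seq(s,p'\concatop\Nil)$. The recursive call gives $\rightId_2(p'):\StepTwo(p'\concatop\Nil,p')$. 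Applying the primitive elementary left extension yields
\[
\mathsf{Whisk}_{L}\mathsf{Step}_{2}(\rightId_2(p')):\StepTwo(\Seq(s,p'\concatop\Nil),\Seq(s,p')),
\]
which is exactly the required cell.
\end{itemize}

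The only point needing care is the endpoint bookkeeping in the $\Seq$ case: $\mathsf{Whisk}_{L}\mathsf{Step}_{2}$ must be applied with the intermediate endpoint $y$ of $s$ fixed, and the recursive cell must live in $\Path(y,z)$. Both hold by the typing of $\Seq$ and of $\concatop$, so I do not expect a real obstacle. In particular, no associativity or $\assoc_2$ cell is needed, unlike in Theorem~\ref{thm:inv2}, because only the head step is framed.

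For later use one may also want a right identity whose target is stated as $p\concatop\Nil\to p$ for the inverse direction; that comes directly from $\sym_2$ applied to the cell above, so I would not construct it separately.
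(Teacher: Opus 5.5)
Your proposal is correct and is exactly the paper's argument. The paper takes $\leftId_2(p)=\mathsf{Refl}_2$ because $\concatop$ recurses on its first argument, and defines $\rightId_2$ by recursion on $p$: $\mathsf{Refl}_2$ in the $\Nil$ case and $\mathsf{Whisk}_{L}\mathsf{Step}_{2}(\rightId_2(p'))$ in the $\Seq(s,p')$ case, using $\Seq(s,p')\concatop\Nil\defineq\Seq(s,p'\concatop\Nil)$.
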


\begin{proof}
$\leftId_2(p) = \mathsf{Refl}_2$: concatenation recurses on its \emph{first} argument, so $\Nil \concatop p = p$ definitionally. For the right identity the recursion is non-trivial: $\rightId_2(\Nil) = \mathsf{Refl}_2$, and $\rightId_2(\Seq(s,p')) = \mathsf{Whisk}_{L}\mathsf{Step}_{2}(\rightId_2(p'))$, since $(\Seq(s,p')) \concatop \Nil = \Seq(s, p' \concatop \Nil)$ by definition.
\end{proof}

\begin{theorem}[Associativity]\label{thm:assoc}
For paths $p : \Path(x,y)$, $q : \Path(y,z)$, $r : \Path(z,w)$:
\[
\assoc_2(p,q,r) : \StepTwo((p \concatop q) \concatop r,\ p \concatop (q \concatop r)).
\]
\end{theorem}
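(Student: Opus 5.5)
The plan is structural recursion on the first path $p$, mirroring the proof of $\rightId_2$ in Theorem~\ref{thm:id2}. Concatenation is defined by recursion on its first argument, so both boundaries of the desired cell compute definitionally once the head of $p$ is exposed. No induction on $q$ or $r$ will be needed, and none of the reduction squares ($\BetaTwo$, $\EtaTwo$, $\LamCongTwo$) will be used. Only $\mathsf{Refl}_2$ and the elementary left extension $\mathsf{Whisk}_{L}\mathsf{Step}_{2}$ will appear.

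\textbf{Base case} $p=\Nil$. Here $(\Nil\concatop q)\concatop r\defineq q\concatop r$ and $\Nil\concatop(q\concatop r)\defineq q\concatop r$ by the first clause of concatenation. So I will put $\assoc_2(\Nil,q,r)=\mathsf{Refl}_2$.

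\textbf{Step case} $p=\Seq(s,p')$ with $s:\Step(x,y')$ and $p':\Path(y',y)$. Unfolding the second clause of concatenation twice on the left boundary gives
\[
(\Seq(s,p')\concatop q)\concatop r \defineq \Seq(s,p'\concatop q)\concatop r \defineq \Seq(s,(p'\concatop q)\concatop r).
\]
The right boundary unfolds once:
\[
\Seq(s,p')\concatop(q\concatop r)\defineq\Seq(s,p'\concatop(q\concatop r)).
\]
The recursive call $\assoc_2(p',q,r):\StepTwo((p'\concatop q)\concatop r,\ p'\concatop(q\concatop r))$ is then exactly what the elementary left extension needs. I will set
\[
\assoc_2(\Seq(s,p'),q,r)=\mathsf{Whisk}_{L}\mathsf{Step}_{2}(\assoc_2(p',q,r)),
\]
which is an instance of the sequence-induction schema of Remark~\ref{rem:induction}.

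The step I expect to need the most care is not mathematical depth but bookkeeping: confirming that each boundary rewrite really is a definitional unfolding, written $\defineq$, and not a transformation that must be witnessed by a cell. The paper stresses that such transformations cannot be used silently as definitional equalities. This is safe precisely because every rewrite used peels a $\Seq$ off the \emph{first} argument of $\concatop$. Had a rewrite required reducing $q$ or $r$, or used $\concatop$ in its second argument, an explicit cell would have been needed. In the Idris formalization, the only concern would be that the implicit endpoint indices of $\mathsf{Whisk}_{L}\mathsf{Step}_{2}$ unify with the unfolded boundaries, and they do by the computation above.
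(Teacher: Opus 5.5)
Your proposal is correct and matches the paper's proof exactly: $\mathsf{Refl}_2$ in the $\Nil$ case, and $\mathsf{Whisk}_{L}\mathsf{Step}_{2}(\assoc_2(p',q,r))$ in the $\Seq(s,p')$ case. Both boundaries unfold definitionally to $\Seq(s,-)$ because concatenation recurses on its first argument.
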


\begin{proof}
$\assoc_2(\Nil,q,r) = \mathsf{Refl}_2$, and $\assoc_2(\Seq(s,p'),q,r) = \mathsf{Whisk}_{L}\mathsf{Step}_{2}(\assoc_2(p',q,r))$: both sides are $\Seq(s, -)$ applied to the two regroupings of $p'$, $q$, $r$. This yields an explicit cell by structural recursion; it does not assert judgmental associativity for an arbitrary variable path.
\end{proof}

\begin{definition}[Whiskerings]\label{def:whisk}
\begin{enumerate}[(i)]
  \item $\whiskL(p, \alpha) : \StepTwo(p \concatop q,\ p \concatop r)$ for $\alpha : \StepTwo(q,r)$, defined by recursion on $p$;
  \item $\whiskR(\alpha, r) : \StepTwo(p \concatop r,\ q \concatop r)$ for $\alpha : \StepTwo(p,q)$, defined by recursion on $r$. The empty case uses the right-identity cells around $\alpha$; the sequence case reassociates, applies $\mathsf{Whisk}_{R}\mathsf{Step}_{2}$ to its head, recurses on its tail, and reassociates the result.
\end{enumerate}
\end{definition}

\begin{proposition}[Horizontal composition]\label{prop:trans2horizontal}
For 2-cells $\alpha_1 : \StepTwo(p_1,q_1)$ and $\alpha_2 : \StepTwo(p_2,q_2)$ with $p_1,q_1 : \Path(x,y)$ and $p_2,q_2 : \Path(y,z)$:
\[
\trans_2\bigl(\whiskR(\alpha_1, p_2),\ \whiskL(q_1, \alpha_2)\bigr) : \StepTwo(p_1 \concatop p_2,\ q_1 \concatop q_2).
\]
\end{proposition}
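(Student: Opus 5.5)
The proof is a typing check. The main work is confirming the boundary types of the two derived whiskerings of Definition~\ref{def:whisk}, and then gluing them with the structural generator $\trans_2$.

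\textbf{Right whiskering.} By Definition~\ref{def:whisk}(ii), $\whiskR(\alpha_1,p_2)$ has type $\StepTwo(p_1\concatop p_2,\ q_1\concatop p_2)$. It is defined by recursion on $p_2$.
\begin{itemize}
\item In the $\Nil$ case we conjugate $\alpha_1$ by the right-identity cells of Theorem~\ref{thm:id2}. We use $\rightId_2(p_1)$ on the left and $\sym_2(\rightId_2(q_1))$ on the right, composed with $\trans_2$.
\item In the $\Seq(s,p')$ case we write $p_1\concatop\Seq(s,p')$ as $(p_1\concatop\lembed s)\concatop p'$, using $\sym_2$ of the associativity cell of Theorem~\ref{thm:assoc}. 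We then apply $\mathsf{Whisk}_{R}\mathsf{Step}_{2}(\alpha_1,s)$ to get a cell between $p_1\concatop\lembed s$ and $q_1\concatop\lembed s$. We recurse on $p'$ with this cell and reassociate back with $\assoc_2$.
\end{itemize}

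\textbf{Left whiskering.} By Definition~\ref{def:whisk}(i), $\whiskL(q_1,\alpha_2)$ has type $\StepTwo(q_1\concatop p_2,\ q_1\concatop q_2)$. It is defined by recursion on $q_1$.
\begin{itemize}
\item For $q_1=\Nil$ it is $\alpha_2$ itself, since $\Nil\concatop p_2\defineq p_2$.
\item For $q_1=\Seq(s,q')$ it is $\mathsf{Whisk}_{L}\mathsf{Step}_{2}(\whiskL(q',\alpha_2))$. This typechecks by the defining clause $\Seq(s,q')\concatop r\defineq\Seq(s,q'\concatop r)$, as in Lemma~\ref{lem:distrib}.
\end{itemize}

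\textbf{Gluing.} The middle boundary $q_1\concatop p_2$ is literally the same path in both cells. So $\trans_2:\StepTwo(p,q)\to\StepTwo(q,r)\to\StepTwo(p,r)$ applies directly, with $p=p_1\concatop p_2$, $q=q_1\concatop p_2$ and $r=q_1\concatop q_2$. This yields the stated cell in $\StepTwo(p_1\concatop p_2,\ q_1\concatop q_2)$.

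The only real obstacle is making sure the recursive whiskerings produce exactly the stated endpoints. For $\whiskR$ this means the right endpoint must be syntactically $q_1\concatop p_2$, not merely a reassociated variant of it. That holds because each $\sym_2(\assoc_2)$ used to open the recursion is closed by a matching $\assoc_2$. The cells themselves are not required to satisfy any equation; only the boundaries matter.

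Alternatively, the cell could be built as $\trans_2(\whiskL(p_1,\alpha_2),\whiskR(\alpha_1,q_2))$. We do not need to compare the two composites, because the statement asks only for the existence of the displayed witness.
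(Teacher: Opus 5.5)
Your proposal is correct and follows the paper's argument. You frame $\alpha_1$ on the right by $p_2$ and $\alpha_2$ on the left by $q_1$, observe that both cells share the middle boundary $q_1\concatop p_2$, and compose them with $\trans_2$; your unfolding of the two recursive whiskerings only makes explicit the boundary types the paper takes from Definition~\ref{def:whisk}.
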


\begin{proof}
Frame $\alpha_1$ on the right by $p_2$, frame $\alpha_2$ on the left by $q_1$, and sequence the two results: the horizontal pasting
\begin{center}
\begin{tikzcd}[column sep=huge]
x \arrow[r, "p_1", ""{name=L, below}, bend left=30] \arrow[r, "q_1"', ""{name=M, below}, bend right=30] &
y \arrow[r, "p_2", ""{name=N, below}, bend left=30] \arrow[r, "q_2"', ""{name=O, below}, bend right=30] &
z
\arrow[from=L, to=M, "{\alpha_1}"', shorten <=2pt, shorten >=2pt, Rightarrow]
\arrow[from=N, to=O, "{\alpha_2}"', shorten <=2pt, shorten >=2pt, Rightarrow]
\end{tikzcd}
\end{center}
\noindent read left to right. The derived right whiskering and the recursively defined left whiskering therefore yield horizontal composition without adding it as a generator.
\end{proof}

\begin{proposition}[Reflexivity commutation for paths]\label{prop:crefl2}
For every path $q : \Path(x,y)$:
\[
\mathsf{CRefl}_2(q) : \StepTwo(\lembed{\refl} \concatop q,\ q \concatop \lembed{\refl}).
\]
\end{proposition}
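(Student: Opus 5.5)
We construct $\mathsf{CRefl}_2(q)$ by structural recursion on $q$, using the step-level generator $\mathsf{CRefl}^{\mathsf{step}}_2(s)$ for each step of $q$. The left boundary $\lembed\refl\concatop q$ reduces definitionally to $\Seq(\refl,q)$. The right boundary $q\concatop\lembed\refl$ is built by recursion on the first argument. So the task is to move the single $\refl$ step from the head of the sequence to its tail, one step at a time.

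\emph{Base case.} For $q=\Nil$ both sides reduce definitionally to $\lembed\refl$, and we take $\mathsf{Refl}_2$.

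\emph{Sequence case.} For $q=\Seq(s,q')$ we need a cell
\[
\StepTwo\bigl(\Seq(\refl,\Seq(s,q')),\ \Seq(s,q'\concatop\lembed\refl)\bigr).
\]
We paste three cells with $\trans_2$:
\begin{enumerate}[(1)]
  \item First view $\Seq(\refl,\Seq(s,q'))$ as $(\lembed\refl\concatop\lembed s)\concatop q'$. By Lemma~\ref{lem:distrib} this holds definitionally, or via $\mathsf{Refl}_2$.
  \item Apply $\whiskR(\mathsf{CRefl}^{\mathsf{step}}_2(s),q')$. This gives a cell to $(\lembed s\concatop\lembed\refl)\concatop q'$, which is definitionally $\Seq(s,\Seq(\refl,q'))$.
  \item Apply $\mathsf{Whisk}_{L}\mathsf{Step}_{2}(\mathsf{CRefl}_2(q'))$, using the induction hypothesis under the head step $s$. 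This reaches $\Seq(s,q'\concatop\lembed\refl)$, which is exactly $q\concatop\lembed\refl$ by the defining clause of $\concatop$.
\end{enumerate}

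\emph{Main obstacle.} The delicate point is bookkeeping the definitional unfoldings. We must check that $\whiskR$ applied to a cell between two-element paths yields boundaries that reduce to the forms stated in step~(2). By Definition~\ref{def:whisk}, $\whiskR$ recurses on $q'$ and reassociates, so its boundaries are literally $(\lembed\refl\concatop\lembed s)\concatop q'$ and $(\lembed s\concatop\lembed\refl)\concatop q'$. These unfold to the $\Seq$-forms only because $\concatop$ recurses on its first argument and both singleton paths are explicit constructors, as in the proof of Lemma~\ref{lem:cancel}.

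\emph{Fallback.} If the host fails to see these reductions, insert $\assoc_2$ cells (Theorem~\ref{thm:assoc}) and their $\sym_2$ to reassociate explicitly. Alternatively, bypass $\whiskR$: apply $\mathsf{Whisk}_{R}\mathsf{Step}_{2}$-style framing and reassociation directly, mirroring the right-inverse case of Theorem~\ref{thm:inv2}.
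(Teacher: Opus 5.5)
Your proposal is correct and matches the paper's proof: both recurse on $q$, take $\mathsf{Refl}_2$ for $\Nil$, and in the $\Seq(s,q')$ case paste $\mathsf{CRefl}^{\mathsf{step}}_2(s)$ framed on the right by $q'$ with the recursive cell $\mathsf{CRefl}_2(q')$ framed on the left by $\lembed s$. Your $\mathsf{Whisk}_{L}\mathsf{Step}_{2}$ is that left framing, since $\lembed s\concatop p\defineq\Seq(s,p)$; the boundary unfoldings you check all hold definitionally, so step~(1) is only unfolding and the fallback is not needed.
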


\begin{proof}
$\mathsf{CRefl}_2(\Nil) = \mathsf{Refl}_2$, and for $q = \Seq(s,q')$ one pastes the elementary commutation square $\mathsf{CRefl}^{\mathsf{step}}_2(s)$ (framed on the right by $q'$) with the recursive call $\mathsf{CRefl}_2(q')$ (framed on the left by $\lembed(s)$).
\end{proof}

\begin{theorem}[2-Beta for paths]\label{thm:beta2}
For $u : A \to B$ and every path $p : \Path(a,b)$:
\[
\Beta_2(u,p) : \StepTwo\bigl(\ap_{\lambda x. u x}(p) \concatop \lembed(\Beta(u,b)),\ \lembed(\Beta(u,a)) \concatop \ap_u(p)\bigr).
\]
\end{theorem}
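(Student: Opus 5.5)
We argue by recursion on $p$, in the style of Theorems~\ref{thm:inv2} and~\ref{thm:assoc}, using the primitive square $\BetaTwo(s)$ at each step and pasting. Write $U=\lambda x.\,u\,x$.

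\textbf{Base case.} For $p=\Nil_a$ we have $a\defineq b$ and $\ap_U(\Nil)\defineq\Nil\defineq\ap_u(\Nil)$. The goal is then
\[
\StepTwo\bigl(\Nil\concatop\lembed(\Beta(u,a)),\ \lembed(\Beta(u,a))\concatop\Nil\bigr).
\]
Its source reduces definitionally to $\lembed(\Beta(u,a))$. We therefore take $\sym_2(\rightId_2(\lembed(\Beta(u,a))))$, using Theorem~\ref{thm:id2}.

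\textbf{Sequence case.} Let $p=\Seq(s,p')$ with $s:\Step(a,c)$ and $p':\Path(c,b)$. By the definition of $\ap$ we have $\ap_U(p)\defineq\Seq(\ApCong(U,s),\ap_U(p'))$, and this equals $\ap_U(\lembed s)\concatop\ap_U(p')$ after unfolding concatenation of a singleton. We paste four cells:
\begin{multline*}
\ap_U(\lembed s)\concatop\bigl(\ap_U(p')\concatop\lembed(\Beta(u,b))\bigr)
\xrightarrow{\whiskL(\ap_U(\lembed s),\,\Beta_2(u,p'))}
\ap_U(\lembed s)\concatop\bigl(\lembed(\Beta(u,c))\concatop\ap_u(p')\bigr)\\
\xrightarrow{\sym_2(\assoc_2)}
\bigl(\ap_U(\lembed s)\concatop\lembed(\Beta(u,c))\bigr)\concatop\ap_u(p')
\xrightarrow{\whiskR(\BetaTwo(s),\,\ap_u(p'))}
\bigl(\lembed(\Beta(u,a))\concatop\ap_u(\lembed s)\bigr)\concatop\ap_u(p')\\
\xrightarrow{\assoc_2}
\lembed(\Beta(u,a))\concatop\bigl(\ap_u(\lembed s)\concatop\ap_u(p')\bigr).
\end{multline*}
The source of the first cell agrees definitionally with $\ap_U(p)\concatop\lembed(\Beta(u,b))$, because $\Seq(x,r)\concatop q\defineq\Seq(x,r\concatop q)$ (Lemma~\ref{lem:distrib}). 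Likewise $\ap_u(\lembed s)\concatop\ap_u(p')\defineq\Seq(\ApCong(u,s),\ap_u(p'))\defineq\ap_u(p)$. Chaining the four cells with $\trans_2$ gives the required cell. The recursive call $\Beta_2(u,p')$ is on a structurally smaller path, so the recursion is well founded.

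\textbf{Main obstacle.} The conceptual content is just the pasting of squares; the delicate part is the boundary bookkeeping. I must confirm that each claimed identification holds definitionally in the host, namely that $\Nil\concatop q$ and $\Seq(s,\Nil)\concatop q$ unfold as stated. Where an identification does not hold definitionally, I will insert explicit $\assoc_2$ or $\leftId_2/\rightId_2$ cells rather than silently rewrite.

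I will also check the intermediate index carefully: the right whiskering of $\BetaTwo(s)$ must use the endpoint $c$ of $s$, not $b$. Choosing the associativity direction by hand is also fiddly. If that becomes awkward, I would instead state a small auxiliary lemma: a square $\StepTwo(t\concatop r,\ l\concatop t')$ composed horizontally with a square $\StepTwo(t_2\concatop r',\ r\concatop t_2')$ yields a square for the concatenated top and bottom edges. I would then apply that lemma uniformly, which isolates the $\assoc_2$ bookkeeping in one place.
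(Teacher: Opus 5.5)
Your proof is correct and follows the paper's route. It recurses on $p$; in the sequence case it pastes the primitive square $\BetaTwo(s)$ beside the recursive cell $\Beta_2(u,p')$ along the shared edge $\lembed(\Beta(u,c))$, regrouping by $\assoc_2$ at both ends, and the only difference is cosmetic: in the base case $\lembed(\Beta(u,a))\concatop\Nil$ also unfolds definitionally to $\lembed(\Beta(u,a))$ (since $\lembed$ is a concrete $\Seq$), so the paper uses $\mathsf{Refl}_2$ where your $\sym_2(\rightId_2(\lembed(\Beta(u,a))))$ is valid but unnecessary.
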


\begin{proof}
$\Beta_2(u,\Nil) = \mathsf{Refl}_2$. For $p = \Seq(s,p')$ the two paths are, up to regrouping by $\assoc_2$,
\begin{multline*}
\ap_{\lambda x. u x}(\lembed s) \concatop \ap_{\lambda x. u x}(p') \concatop \lembed(\Beta(u,b)),\\
\lembed(\Beta(u,a)) \concatop \ap_u(\lembed s) \concatop \ap_u(p').
\end{multline*}
One pastes the direct generator \code{Beta2Step} (written $\BetaTwo(s)$ above) in the middle and the recursive cell $\Beta_2(u,p')$ to its right, regrouping by $\assoc_2$ at both ends. For a path with $n$ entries, this recursion pastes $n$ elementary beta squares, with the beta component at each intermediate endpoint shared by neighbouring squares.
\end{proof}

\begin{theorem}[2-Eta for paths]\label{thm:eta2}
For every path $p : \Path(u,v)$:
\[
\Eta_2(p) : \StepTwo\bigl(\ap_{\EtaExpand}(p) \concatop \lembed(\Eta(v)),\ \lembed(\Eta(u)) \concatop p\bigr).
\]
\end{theorem}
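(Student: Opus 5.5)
The plan is to mirror the proof of Theorem~\ref{thm:beta2}, by structural recursion on $p$. Each sequence entry is handled by the primitive square $\EtaTwo(s)$ of Definition~\ref{def:step2}, and the recursive call handles the tail.

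In the base case $p=\Nil_u$ we have $u\defineq v$. The left side $\ap_{\EtaExpand}(\Nil)\concatop\lembed(\Eta(u))$ reduces definitionally to $\lembed(\Eta(u))$. The right side $\lembed(\Eta(u))\concatop\Nil$ reduces to $\Seq(\Eta(u),\Nil)$ by the recursion of $\concatop$ on its first argument. So we can take $\Eta_2(\Nil)=\mathsf{Refl}_2$, or, to be safe, $\sym_2(\rightId_2(\lembed(\Eta(u))))$ from Theorem~\ref{thm:id2}.

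For $p=\Seq(s,p')$ with $s:\Step(u,w)$ and $p':\Path(w,v)$, the left boundary unfolds to
\[
\Seq(\ApCong(\EtaExpand,s),\ap_{\EtaExpand}(p'))\concatop\lembed(\Eta(v)).
\]
Up to $\assoc_2$ (Theorem~\ref{thm:assoc}) this is $\ap_{\EtaExpand}(\lembed s)\concatop(\ap_{\EtaExpand}(p')\concatop\lembed(\Eta(v)))$. The construction is the following chain of cells, glued with $\trans_2$:
\begin{enumerate}[(1)]
\item Apply the left whiskering $\whiskL(\ap_{\EtaExpand}(\lembed s),\Eta_2(p'))$ of Definition~\ref{def:whisk}, using the recursive call. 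This reaches $\ap_{\EtaExpand}(\lembed s)\concatop(\lembed(\Eta(w))\concatop p')$.
\item Reassociate with $\sym_2(\assoc_2(\ldots))$ to $(\ap_{\EtaExpand}(\lembed s)\concatop\lembed(\Eta(w)))\concatop p'$.
\item Apply $\whiskR(\EtaTwo(s),p')$ to obtain $(\lembed(\Eta(u))\concatop\lembed s)\concatop p'$.
\item Apply $\assoc_2$ once more. This gives $\lembed(\Eta(u))\concatop(\lembed s\concatop p')$, which is definitionally $\lembed(\Eta(u))\concatop\Seq(s,p')$.
\end{enumerate}
Geometrically, this pastes the $\EtaTwo(s)$ square to the left of the recursively obtained square for $p'$. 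The shared vertical edge is $\lembed(\Eta(w))$.

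The main obstacle is bookkeeping, not mathematics. We must check that each intermediate boundary matches the next cell's source up to MLTT/Idris definitional unfolding only. In particular, $\ap_{\EtaExpand}(\lembed s)$ must unfold to $\lembed(\ApCong(\EtaExpand,s))$, and the endpoint $\EtaExpand(w)=\lambda x.\,w\,x$ must agree judgementally with the endpoint used in the type of $\EtaTwo(s)$. Where a boundary differs by regrouping rather than by computation, I would insert an explicit $\assoc_2$ or identity cell rather than rely on definitional equality. The recursion also needs the whiskering operations to be available at the needed generality before $\Eta_2$ is defined, and they are.
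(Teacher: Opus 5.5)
Your construction is correct, but it is not the route the paper takes. You recurse directly in $\Eta_2$ and paste one primitive square $\EtaTwo(s_i)$ per entry of $p$, exactly as in Theorem~\ref{thm:beta2}. The four cells of your chain have matching boundaries. Your first regrouping is in fact definitional, since $\ap_{\EtaExpand}(\lembed s)\concatop X\defineq\Seq(\ApCong(\EtaExpand,s),X)$.

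The paper instead compresses the whole path into one step $s_p=\pathToStep(p)$. It uses the coherence certificate $c_p:\StepTwo(\lembed{s_p},p)$ (\code{pathToStepCoherence}), built by induction on $p$ from $\mathsf{ReflStep}_2$, $\trans\mathsf{Step}_2$ and left whiskering. It then chains three cells:
\begin{enumerate}[(i)]
\item the reverse of $\ap\mathsf{CongStep}_2(\EtaExpand,c_p)$, right-whiskered by $\lembed(\Eta(v))$;
\item a single instance $\EtaTwo(s_p)$ of the generator, applied at the compound step $s_p$;
\item $\whiskL(\lembed(\Eta(u)),c_p)$.
\end{enumerate}

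\textbf{What the paper's route buys and costs.} It pushes all recursion into $c_p$, uses $\EtaTwo$ only once, and needs no reassociation cells. In exchange it relies on the cell-level congruence $\ap\mathsf{CongStep}_2$, on the structural step-to-path cells, and on $\EtaTwo$ accepting an arbitrary compound step. Its intermediate boundary also passes through the compressed step $\trans(s_1,\trans(\ldots,\refl))$.

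\textbf{What your route buys and costs.} You need none of those ingredients. Every intermediate boundary is built from the original steps of $p$ and eta-contractions. The resulting witness is literally a strip of elementary eta squares sharing the vertical edges $\lembed(\Eta(w))$, uniform with the beta case. The price is an $\assoc_2$ regrouping and a recursive $\whiskR$ along the tail at each level, which gives a larger witness.
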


\begin{proof}
Put $s_p=\pathToStep(p)$ and let
$c_p:\StepTwo(\lembed{s_p},p)$ be the certificate
\code{pathToStepCoherence(p)}.  This certificate is defined by structural
induction on $p$: the $\Nil$ case is $\mathsf{ReflStep}_2$, and the
$\Seq(s,p')$ case pastes $\mathsf{TransStep}_2$ with the inductive
coherence using left whiskering.  Apply $\ap$ to $c_p$, right-whisker by
$\lembed(\Eta(v))$, and reverse that cell.  Then paste the primitive square
$\EtaTwo(s_p)$ and left-whisker $c_p$ by $\lembed(\Eta(u))$:
\begin{align*}
\StepTwo(&\ap_{\EtaExpand}(p)\concatop\lembed(\Eta(v)),
          \ap_{\EtaExpand}(\lembed{s_p})\concatop\lembed(\Eta(v)))\\
\StepTwo(&\ap_{\EtaExpand}(\lembed{s_p})\concatop\lembed(\Eta(v)),
          \lembed(\Eta(u))\concatop\lembed{s_p})\\
\StepTwo(&\lembed(\Eta(u))\concatop\lembed{s_p},
          \lembed(\Eta(u))\concatop p).
\end{align*}
Transitivity gives the displayed result.  Thus the proof is constructive
induction on $p$; the path-level witness is assembled from the step
generator and structural coherence, rather than postulated as a new rule.
\end{proof}

\begin{theorem}[Naturality of Lambda congruence]\label{thm:lamcong2}
For $h^I:\HomotopyStepI(u,v)$, put $h=\llbracket h^I\rrbracket$. For every path $p : \Path(a,b)$:
\begin{multline*}
\LamCong_2(h^I,p) : \StepTwo\bigl(\ap_{\lambda x.\,u x}(p) \concatop \lembed(\ApCong(\lambda z. z\, b, \LamCong(h))),\\
\lembed(\ApCong(\lambda z. z\, a, \LamCong(h))) \concatop \ap_{\lambda x.\,v x}(p)\bigr).
\end{multline*}
\end{theorem}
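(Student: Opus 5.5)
We proceed by structural recursion on $p$, in the same way as the proof of Theorem~\ref{thm:beta2}, with the primitive square $\LamCongTwo(h^I,s)$ playing the role of $\BetaTwo(s)$. Write $L_t=\lembed(\ApCong(\lambda z.\,zt,\LamCong(h)))$ as in Definition~\ref{def:step2}. The cell we want then has type $\StepTwo(\ap_{\lambda x.\,ux}(p)\concatop L_b,\ L_a\concatop\ap_{\lambda x.\,vx}(p))$.

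\textbf{Base case.} For $p=\Nil_a$ we have $a\defineq b$. Both $\ap$-images are $\Nil$, so both boundaries reduce definitionally to $L_a$, and the cell is $\mathsf{Refl}_2$.

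\textbf{Sequence case.} Let $p=\Seq(s,p')$ with $s:\Step(a,c)$ and $p':\Path(c,b)$. By the defining clause of $\ap$,
\[
\ap_{\lambda x.\,ux}(p)\defineq\Seq(\ApCong(\lambda x.\,ux,s),\ap_{\lambda x.\,ux}(p'))\defineq\ap_{\lambda x.\,ux}(\lembed s)\concatop\ap_{\lambda x.\,ux}(p'),
\]
where the last identification holds because $\lembed s\concatop q\defineq\Seq(s,q)$. The same holds for $v$. The left boundary is therefore definitionally $\ap_{\lambda x.\,ux}(\lembed s)\concatop(\ap_{\lambda x.\,ux}(p')\concatop L_b)$. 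The right boundary is $L_a\concatop(\ap_{\lambda x.\,vx}(\lembed s)\concatop\ap_{\lambda x.\,vx}(p'))$. We then compose the following cells with $\trans_2$:
\begin{enumerate}[(1)]
\item $\whiskL(\ap_{\lambda x.\,ux}(\lembed s),\LamCong_2(h^I,p'))$, which is the recursive call at the intermediate endpoint $c$. It yields $\ap_{\lambda x.\,ux}(\lembed s)\concatop(L_c\concatop\ap_{\lambda x.\,vx}(p'))$.
\item $\sym_2(\assoc_2(\ldots))$ from Theorem~\ref{thm:assoc}, giving $(\ap_{\lambda x.\,ux}(\lembed s)\concatop L_c)\concatop\ap_{\lambda x.\,vx}(p')$.
\item $\whiskR(\LamCongTwo(h^I,s),\ap_{\lambda x.\,vx}(p'))$, giving $(L_a\concatop\ap_{\lambda x.\,vx}(\lembed s))\concatop\ap_{\lambda x.\,vx}(p')$.
\item $\assoc_2$, which lands on the right boundary.
\end{enumerate}
The vertical edge $L_c$ is shared between the recursive cell and the primitive square. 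Unrolled, the construction pastes one elementary square $\LamCongTwo(h^I,s_i)$ for each entry of $p$.

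\textbf{Main obstacle.} The main thing to check is the bookkeeping of definitional unfoldings. We must confirm that $\ap_{\lambda x.\,ux}(\lembed s)\concatop L_c$ is literally the left boundary of the generator $\LamCongTwo(h^I,s):\StepTwo(\ap_{\lambda x.\,ux}(\lembed s)\concatop L_c,\ L_a\concatop\ap_{\lambda x.\,vx}(\lembed s))$. We must also confirm that the $\ap$ of a singleton unfolds to a singleton, so that the associativity cells have the right boundaries. Both follow because $\concatop$ and $\ap$ recurse on the first argument and singletons are explicit $\Seq(-,\Nil)$. Hence no further coherence cells such as $\rightId_2$ should be needed.

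One point requires care. The index $h^I$ is fixed throughout the recursion, since the homotopy does not depend on $p$, and only the argument path shrinks. So the induction is purely on $p$ and needs no induction on $h^I$. The presented homotopy is used only through the primitive generator $\LamCongTwo$.
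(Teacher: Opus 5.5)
Your proposal is correct and follows the paper's proof. You use structural recursion on $p$, with $\mathsf{Refl}_2$ in the empty case, and in the $\Seq(s,p')$ case you paste the recursive cell for $p'$ with the primitive square $\LamCongTwo(h^I,s)$ in the same orientation, sharing the edge $L_c$ and regrouping by $\assoc_2$ as in Theorem~\ref{thm:beta2}. Your boundary bookkeeping checks out, since singletons and $\ap$ of singletons are explicit $\Seq(-,\Nil)$ terms.
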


\begin{proof}[Proof sketch]
 The elementary generator $\LamCongTwo(h^I,s)$ has the displayed orientation. The implementation assembles the strip by induction on $p$, using that generator and the recursive hypothesis in the same direction. Thus \code{lamCong2} requires the global presentation $h^I$ and exposes the displayed direction directly.
\end{proof}

\begin{theorem}[Functoriality of $\ap$]\label{thm:apcong}
\begin{enumerate}[(i)]
  \item $\ap\mathsf{CongId}(p) : \StepTwo(\ap_{\lambda t. t}(p),\ p)$ for every path $p$;
  \item $\ap\mathsf{CongCompose}(u, f, p) : \StepTwo(\ap_u(\ap_f(p)),\ \ap_{\lambda z. u(f\,z)}(p))$ for every path $p$.
  \item $\mathsf{apcongConcat}(f,p,q):\StepTwo(\ap_f(p\concatop q),\ap_f(p)\concatop\ap_f(q))$; also $\ap_f(\Nil)\defineq\Nil$.
\end{enumerate}
\end{theorem}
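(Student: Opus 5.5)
All three parts go by structural recursion on the path $p$, following the layered strategy of Remark~\ref{rem:induction}. The step-level generators $\ap\mathsf{CongIdStep}_2$ and $\ap\mathsf{CongComposeStep}_2$ handle each entry. The recursive hypothesis handles the tail. The definitional clauses $\ap_f(\Nil)\defineq\Nil$ and $\ap_f(\Seq(s,p'))\defineq\Seq(\ApCong(f,s),\ap_f(p'))$ expose the head of every boundary, so we can glue the head and tail cells with the derived horizontal composition of Proposition~\ref{prop:trans2horizontal}.

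For (i), set $\ap\mathsf{CongId}(\Nil)=\mathsf{Refl}_2$. For $p=\Seq(s,p')$, the source $\ap_{\lambda t.t}(p)$ unfolds definitionally to $\lembed(\ApCong(\lambda t.t,s))\concatop\ap_{\lambda t.t}(p')$, because $\lembed(-)\concatop q$ reduces to $\Seq(-,q)$. The target unfolds to $\lembed s\concatop p'$. Horizontally compose $\ap\mathsf{CongIdStep}_2(s)$ with the recursive cell $\ap\mathsf{CongId}(p')$, i.e.\ take
\[
\trans_2\bigl(\whiskR(\ap\mathsf{CongIdStep}_2(s),\ap_{\lambda t.t}(p')),\ \whiskL(\lembed s,\ap\mathsf{CongId}(p'))\bigr).
\]
Part (ii) is identical with $\ap\mathsf{CongComposeStep}_2(u,f,s)$ as the head cell. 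Here the source unfolds twice: $\ap_u(\ap_f(\Seq(s,p')))\defineq\Seq(\ApCong(u,\ApCong(f,s)),\ap_u(\ap_f(p')))$, which is precisely $\ap_u(\ap_f(\lembed s))\concatop\ap_u(\ap_f(p'))$.

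For (iii), the clause $\ap_f(\Nil)\defineq\Nil$ holds by definition. For $\mathsf{apcongConcat}$, recurse on the first argument $p$, since both $\concatop$ and $\ap_f$ recurse on their first argument. When $p=\Nil$, both sides reduce to $\ap_f(q)$, so the cell is $\mathsf{Refl}_2$. When $p=\Seq(s,p')$, the source $\ap_f(\Seq(s,p'\concatop q))$ unfolds to $\Seq(\ApCong(f,s),\ap_f(p'\concatop q))$. The target $\Seq(\ApCong(f,s),\ap_f(p'))\concatop\ap_f(q)$ unfolds to $\Seq(\ApCong(f,s),\ap_f(p')\concatop\ap_f(q))$. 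So a single $\mathsf{Whisk}_L\mathsf{Step}_2$ applied to the recursive cell $\mathsf{apcongConcat}(f,p',q)$ suffices, exactly as in Theorem~\ref{thm:assoc}.

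The main obstacle I expect is boundary bookkeeping in (i) and (ii), not any conceptual gap. The derived $\whiskR$ of Definition~\ref{def:whisk} recurses on its path argument and inserts right-identity and associativity cells. Its output boundary is nonetheless exactly $\lembed(\cdot)\concatop r$, so it matches the definitional unfolding of the $\ap$ clauses above. If this typing proves awkward in the implementation, I would instead apply the primitive $\mathsf{Whisk}_L\mathsf{Step}_2$ to the recursive cell (head $\ApCong(\ldots,s)$ fixed). I would then change the head separately by
\[
\whiskR(\ap\mathsf{CongIdStep}_2(s),p'):\StepTwo(\lembed(\ApCong(\lambda t.t,s))\concatop p',\ \lembed s\concatop p'),
\]
and compose the two with $\trans_2$. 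Either route uses only results stated before the theorem.
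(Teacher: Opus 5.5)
Your proposal is correct and follows the paper's proof: structural recursion on $p$, with the step generator handling the head, the recursive cell handling the tail, and (iii) obtained from $\mathsf{Refl}_2$ and $\mathsf{Whisk}_{L}\mathsf{Step}_{2}$. Your fallback for (i) and (ii) is exactly the paper's pasting: $\mathsf{Whisk}_{L}\mathsf{Step}_{2}$ on the recursive cell, then the head generator framed on the right by $p'$. Your primary route via Proposition~\ref{prop:trans2horizontal} only performs the head and tail moves in the opposite order, and its boundaries match definitionally, as you checked.
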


\begin{proof}[Proof sketch]
For (i): $p = \Nil$ gives $\mathsf{Refl}_2$; for $p = \Seq(s,p')$, paste $\mathsf{Whisk}_{L}\mathsf{Step}_{2}(\ap\mathsf{CongId}(p'))$ with the elementary square $\ap\mathsf{CongIdStep}_{2}(s)$ framed by $p'$. For (ii) the same pattern with $\ap\mathsf{CongComposeStep}_{2}(u,f,s)$. For (iii), induction on $p$ uses $\mathsf{Refl}_2$ and $\mathsf{Whisk}_{L}\mathsf{Step}_{2}$. The first two laws concern identity and composition of MLTT/Idris host functions; the third concerns composition of paths. Together they specify the functorial behaviour needed below.
\end{proof}

\begin{theorem}[Inverse distributes]\label{thm:invdist}
For paths $p : \Path(x,y)$, $q : \Path(y,z)$:
\[
\invDist(p,q) : \StepTwo\bigl((p \concatop q)\invop,\ q\invop \concatop p\invop\bigr).
\]
\end{theorem}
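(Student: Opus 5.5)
We construct $\invDist(p,q)$ by structural recursion on the first argument $p$, since both $\concatop$ and $\invop$ recurse on their first argument. The tools are $\rightId_2$ (Theorem~\ref{thm:id2}), $\assoc_2$ (Theorem~\ref{thm:assoc}), the whiskerings of Definition~\ref{def:whisk}, and $\sym_2$, $\trans_2$.

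\emph{Base case} $p=\Nil$. Here $(\Nil\concatop q)\invop\defineq q\invop$ and $q\invop\concatop\Nil\invop\defineq q\invop\concatop\Nil$. So we take $\sym_2(\rightId_2(q\invop))$, which is a cell $\StepTwo(q\invop, q\invop\concatop\Nil)$.

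\emph{Step case} $p=\Seq(s,p')$. Unfolding the definitions gives
\[
(\Seq(s,p')\concatop q)\invop \defineq \Seq(s,p'\concatop q)\invop \defineq (p'\concatop q)\invop\concatop\lembed(\sym(s)),
\]
and the target is
\[
q\invop\concatop\Seq(s,p')\invop \defineq q\invop\concatop({p'}\invop\concatop\lembed(\sym(s))).
\]
We paste two cells with $\trans_2$:
\begin{align*}
(p'\concatop q)\invop\concatop\lembed(\sym(s))
&\xrightarrow{\ \whiskR(\invDist(p',q),\,\lembed(\sym(s)))\ }
(q\invop\concatop {p'}\invop)\concatop\lembed(\sym(s))\\
&\xrightarrow{\ \assoc_2(q\invop,{p'}\invop,\lembed(\sym(s)))\ }
q\invop\concatop({p'}\invop\concatop\lembed(\sym(s))).
\end{align*}
The first cell right-whiskers the recursive call by the singleton $\lembed(\sym(s))$. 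The second is the associativity cell, used forward.

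The main point to check is that the boundaries match on the nose, up to the host's definitional unfolding and not merely up to a cell. This requires that $\Seq(s,p')\concatop q$ reduce to $\Seq(s,p'\concatop q)$ before inversion unfolds, which holds because both operations recurse on their first argument, as in Lemma~\ref{lem:distrib}. No primitive generator beyond the structural ones is needed, and the $\sym$ step is carried along untouched.

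An alternative is to define $\whiskR$ with a singleton argument directly via $\mathsf{Whisk}_{R}\mathsf{Step}_{2}(\invDist(p',q),\sym(s))$. This avoids the recursion inside $\whiskR$, since the right frame is a single step. I would use this primitive form, because its boundary is exactly $(\cdot)\concatop\lembed(\sym(s))$ as required.
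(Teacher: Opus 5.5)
Your proposal is correct and matches the paper's proof. Both argue by recursion on $p$: the $\Nil$ case is $\sym_2(\rightId_2(q\invop))$, and the $\Seq(s,p')$ case frames the recursive cell $\invDist(p',q)$ on the right by $\lembed(\sym(s))$ and then regroups with $\assoc_2$. Your optional use of the primitive $\mathsf{Whisk}_{R}\mathsf{Step}_{2}(\invDist(p',q),\sym(s))$ for that frame is a harmless shortcut, since its boundary is exactly the one required.
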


\begin{proof}[Proof sketch]
Recursion on $p$: for $\Nil$ use $\rightId_2$ (symmetrically); for $p = \Seq(s,p')$, unfold $(p \concatop q)\invop = (p' \concatop q)\invop \concatop \lembed(\sym s)$ by the defining recursion of $\invop$, paste the recursive cell $\invDist(p',q)$ framed by $\lembed(\sym s)$, and regroup with $\assoc_2$.
\end{proof}

\begin{theorem}[Involution]\label{thm:invinv}
For every path $p : \Path(x,y)$:
\[
\invInv(p) : \StepTwo\bigl((p\invop)\invop,\ p\bigr).
\]
\end{theorem}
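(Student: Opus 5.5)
We construct $\invInv(p)$ by structural recursion on $p$, like the other coherence laws of this section. For $p=\Nil$ we have $\Nil\invop\defineq\Nil$, so $(\Nil\invop)\invop\defineq\Nil$ and we take $\mathsf{Refl}_2$. For $p=\Seq(s,p')$ the defining recursion of $\invop$ gives $p\invop\defineq{p'}\invop\concatop\lembed(\sym s)$. Hence
\[
(p\invop)\invop \defineq \bigl({p'}\invop\concatop\lembed(\sym s)\bigr)\invop .
\]

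The chain of cells is as follows. First apply $\invDist({p'}\invop,\lembed(\sym s))$ from Theorem~\ref{thm:invdist}. This yields a cell to $\lembed(\sym s)\invop\concatop({p'}\invop)\invop$. Next we simplify the head. Unfolding the definitions gives $\lembed(\sym s)\invop\defineq\Nil\concatop\lembed(\sym(\sym s))\defineq\lembed(\sym(\sym s))$. Whiskering the elementary cell $\sym\sym_2(s)$ on the right by $({p'}\invop)\invop$ (via $\whiskR$) produces a cell to $\lembed s\concatop({p'}\invop)\invop$. This last path is definitionally $\Seq(s,({p'}\invop)\invop)$. Finally, $\mathsf{Whisk}_{L}\mathsf{Step}_{2}(\invInv(p'))$ carries this to $\Seq(s,p')=p$. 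Composing the three cells with $\trans_2$ gives $\invInv(\Seq(s,p'))$.

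The main point to check is that the middle step really has the stated boundaries up to definitional unfolding. We must verify that $\lembed(\sym s)\invop$ computes to the singleton $\lembed(\sym(\sym s))$, using $\Nil\concatop q\defineq q$. We must also verify that $\lembed s\concatop r$ computes to $\Seq(s,r)$. Both hold because concatenation and inversion recurse on their first argument, so no $\assoc_2$ or $\leftId_2$ cells are needed. If some boundary fails to reduce, for instance because $\whiskR$ is presented with reassociations, we would insert $\leftId_2$ or $\assoc_2$ cells from Theorems~\ref{thm:id2} and~\ref{thm:assoc} at the seams. The recursion itself remains unchanged, and the cell is well founded because each recursive call is on the strictly shorter tail $p'$.
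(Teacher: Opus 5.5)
Your proposal is correct and follows essentially the same route as the paper's proof. Both use $\mathsf{Refl}_2$ for $\Nil$; in the $\Seq(s,p')$ case, both apply $\invDist({p'}\invop,\lembed(\sym s))$, convert the head $\lembed(\sym(\sym s))$ to $\lembed s$ with the double-symmetry generator $\sym\sym_2(s)$ framed on the right, and then frame the recursive cell $\invInv(p')$ by $\lembed s$, which you realise as $\mathsf{Whisk}_{L}\mathsf{Step}_{2}$.
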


\begin{proof}[Proof sketch]
For $\Nil$ use $\mathsf{Refl}_2$. If $p=\Seq(s,p')$, unfold $p\invop={p'}\invop\concatop\lembed{\sym(s)}$ and apply Theorem~\ref{thm:invdist} to obtain a cell to
\[
\lembed{\sym(s)}\invop\concatop({p'}\invop)\invop.
\]
The double-symmetry generator converts the first factor to $\lembed s$; frame the recursive involution cell by that factor. The result is $\Seq(s,p')$. The initial inverse-distribution step is a cell, not a definitional equality.
\end{proof}

\section{Certified homotopies and naturality}\label{sec:homotopy}

For functions $f,g:A\to B$, a pointwise path family is an inhabitant of
\[
(f\homotopy g):=(x:A)\to\Path(fx,gx).
\]
We call this a homotopy in our two-dimensional path relation. The term does not assert that a topological interpretation has been constructed. Its naturality at $p:\Path(x,y)$ would be a cell
\begin{equation}\label{eq:naturality}
\Nat(h,p):\StepTwo(\ap_f(p)\concatop hy,\ hx\concatop\ap_g(p)).
\end{equation}
For identity-valued homotopies, the corresponding statement is a standard consequence of path induction \cite[Lemma~2.4.3]{hottbook}. Our paths have different generators and elimination rules, so that argument cannot simply be reused without checking their cases.

\begin{definition}[Pointwise and presented homotopies]\label{def:homotopy}
A step family is $\HomotopyStep(f,g):=(x:A)\to\Step(fx,gx)$. A \emph{path in function space} $\HomotopyPath(f,g)$ is generated by $\mathsf{HNil}$ and $\mathsf{HSeq}(h,hp)$, where $h:\HomotopyStep(f,g)$ and $hp:\HomotopyPath(g,k)$ give a path from $f$ to $k$. Its pointwise evaluation is
\[
|\mathsf{HNil}|x=\Nil,\qquad
|\mathsf{HSeq}(h,hp)|x=\Seq(hx,|hp|x).
\]
The bars abbreviate \code{homotopyPathToHomotopy}.

The inductive presentation $\HomotopyStepI(f,g)$ has constructors
\begin{align*}
\mathsf{HSRefl}&:\HomotopyStepI(f,f),\\
\mathsf{HSBeta}(u)&:\HomotopyStepI(\lambda x.\,ux,u),\\
\mathsf{HSEta}&:\HomotopyStepI(\EtaExpand,\IdFun),\\
\mathsf{HSApCong}(v,e)&:\HomotopyStepI(v\circ f,v\circ g),\\
\mathsf{HSSym}(e)&:\HomotopyStepI(g,f),\\
\mathsf{HSTrans}(e,d)&:\HomotopyStepI(f,k),\\
\mathsf{HSLamCong}(e)&:\HomotopyStepI(\lambda x.\,fx,\lambda x.\,gx).
\end{align*}
Here $e:\HomotopyStepI(f,g)$, $d:\HomotopyStepI(g,k)$, and $\mathsf{HSEta}$ has domain and codomain the function type under consideration. The analogous sequence type $\HomotopyPathI$ has constructors $\mathsf{HNilI}$ and $\mathsf{HSeqI}$ using these presented step homotopies.
Structural recursion defines \code{evalHomotopyStepIHom}$(e)=\llbracket e\rrbracket:\HomotopyStep(f,g)$; each clause evaluates the corresponding presented constructor. In particular, \code{LamCong2Step} and \code{lamCong2} receive $e$ itself and use $\llbracket e\rrbracket$ only to form their boundary steps.
\end{definition}

The phrase ``path in function space'' refers only to the endpoints
$f\to g$ and the outer constructors $\mathsf{HNil}$ and $\mathsf{HSeq}$;
it does not prescribe how a square is drawn. In the naturality diagrams,
$h x$ and $h y$ are vertical edges, while $\ap_f(p)$ and $\ap_g(p)$ are the
top and bottom edges. We reserve \emph{horizontal} for horizontal
composition of 2-cells (Proposition~\ref{prop:trans2horizontal}).

Both function-space path types are inductive, but their entries expose
different data: $\HomotopyPath$ stores arbitrary pointwise step families,
whereas $\HomotopyPathI$ stores derivations from a specified grammar.
Thus $\HomotopyStep$ is semantic data without an inductive derivation for
recursive naturality, while $\HomotopyStepI$ exposes that derivation and its
certified evaluator yields \code{NatStepI}; for sequences,
$\HomotopyPathI$ with $\EvalHPI$ yields \code{NatPathI}.

The global-versus-pointwise distinction is the same at the sequence level:
$\HomotopyPath$ has a global outer sequence but arbitrary semantic entries,
whereas $\HomotopyPathI$ requires one global decomposition into presented
step homotopies. The term
\code{HSeq (\textbackslash x => pathToStep (h x)) HNil} can compress a pointwise family into
an ordinary $\HomotopyPath$, but it is not exported as a named operation and
does not normalize the family into elementary $\HomotopyPathI$ constructors.
Likewise, $\LamCong$ is a syntactic $\Step$ constructor whose premise may
contain an arbitrary homotopy, but its two-dimensional generator
\code{LamCong2Step} requires a $\HomotopyStepI$ presentation;
\code{funPath} does not supply such a normalization.

Consequently, naturality for arbitrary $\HomotopyStep$ and pointwise
homotopies is a semantic claim too strong to obtain by computation. In the
present development \code{StepNat} is constructively refuted; the certified
inductive routes above are the available naturality theorems. The file
\code{PathNonComputable.idr} records \code{NatPath} as the corresponding type
declaration:
\[
\begin{aligned}
\code{NatPath} &: (h:\homotopy(f,g))\to(p:\Path(x,y))\\
&\to \StepTwo(\,\ap_f(p)\concatop h y,\ h x\concatop\ap_g(p)\,).
\end{aligned}
\]
Here $h:(x:A)\to\Path(fx,gx)$ is arbitrary; the declaration has no defining
equation or inhabitant and is not a certified naturality theorem. The
refutation module formalizes the obstruction: \code{natPathToStepNat}
specializes such an operation to one-step homotopies and singleton paths, and
\code{notNatPath} derives \code{Void} from \code{notStepNat}.

\begin{definition}[Certified evaluation]\label{def:evalhsi}
For $e:\HomotopyStepI(f,g)$, the record $\EvalHSI(e)$ contains a family $h:\HomotopyStep(f,g)$ and, for every $p:\Path(x,y)$, a cell
\begin{equation}\label{eq:certnat}
\StepTwo(\ap_f(p)\concatop\lembed{hy},\ \lembed{hx}\concatop\ap_g(p)).
\end{equation}
Its fields are \code{evalHSIHom} and \code{evalHSINat}. The index $e$ does not itself constrain the first field by an evaluation equation: the record is a certificate interface. The recursive construction below selects a particular evaluation of each expression; uniqueness of arbitrary records is not claimed.
Here a \emph{certificate} is an Idris term inhabiting this dependent record:
it packages the evaluated pointwise family together with the corresponding
\code{Step2} witness of naturality.  For function-space presentations, the
analogous \code{EvalHPI} package contains the evaluated function-space path and
its \code{Path2} witness.  We call the resulting property
\emph{computationally certified naturality} when such a package is produced
by total structural recursion on the inductive presentation and checked by
the Idris type checker.  The certificate is therefore explicit data together
with its proof; it is not an assumption about an arbitrary pointwise family.
\end{definition}

\begin{theorem}[Canonical certified evaluation]\label{prop:evalhsi}
Every $e:\HomotopyStepI(f,g)$ has a recursively constructed certificate $\ev(e):\EvalHSI(e)$. Its pointwise family, written $\llbracket e\rrbracket x$, obeys
\begin{align*}
\llbracket\mathsf{HSRefl}\rrbracket x&=\refl,\\
\llbracket\mathsf{HSBeta}(u)\rrbracket x&=\Beta(u,x),\\
\llbracket\mathsf{HSEta}\rrbracket t&=\Eta(t),\\
\llbracket\mathsf{HSApCong}(v,e)\rrbracket x&=\ApCong(v,\llbracket e\rrbracket x),\\
\llbracket\mathsf{HSSym}(e)\rrbracket x&=\sym(\llbracket e\rrbracket x),\\
\llbracket\mathsf{HSTrans}(e,d)\rrbracket x&=\trans(\llbracket e\rrbracket x,\llbracket d\rrbracket x),\\
\llbracket\mathsf{HSLamCong}(e)\rrbracket x
 &=\ApCong(\lambda z.\,zx,\LamCong(\llbracket e\rrbracket)).
\end{align*}
In each case the certificate contains a proof of~\eqref{eq:certnat}.
\end{theorem}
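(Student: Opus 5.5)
The construction of $\ev(e)$ is by structural recursion on $e:\HomotopyStepI(f,g)$. In each clause the first field is defined by the displayed equation, so the evaluation equations hold definitionally. The remaining task is to produce the naturality field
\[
N_e(p):\StepTwo(\ap_f(p)\concatop\lembed{\llbracket e\rrbracket y},\ \lembed{\llbracket e\rrbracket x}\concatop\ap_g(p))
\]
for every $p:\Path(x,y)$. The recursive clauses may use the fields $N_e,N_d$ of the certificates $\ev(e),\ev(d)$ of the immediate subterms.

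The base constructors should reduce to results already proved.
\begin{enumerate}[(1)]
\item For $\mathsf{HSRefl}$ the goal is $\StepTwo(\ap_f(p)\concatop\lembed\refl,\ \lembed\refl\concatop\ap_f(p))$. This is $\sym_2(\mathsf{CRefl}_2(\ap_f(p)))$ by Proposition~\ref{prop:crefl2}.
\item For $\mathsf{HSBeta}(u)$ the goal is Theorem~\ref{thm:beta2} verbatim.
\item For $\mathsf{HSEta}$ the functions are $f=\EtaExpand$ and $g=\IdFun$. The goal then becomes Theorem~\ref{thm:eta2}, after using $\ap\mathsf{CongId}$ (Theorem~\ref{thm:apcong}(i)) to replace $\ap_{\IdFun}(p)$ by $p$; this step is framed by $\whiskL$.
\item For $\mathsf{HSLamCong}(e)$ the goal is exactly Theorem~\ref{thm:lamcong2} applied to $e$. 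That theorem was stated with the presentation $h^I=e$ precisely so that this case closes directly.
\end{enumerate}

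The recursive constructors paste squares.
\begin{itemize}
\item For $\mathsf{HSApCong}(v,e)$, apply $\ap\mathsf{CongStep}_2(v,-)$ to $N_e(p)$. Then use $\mathsf{apcongConcat}$ to split $\ap_v$ over both concatenations, and $\ap\mathsf{CongCompose}$ to rewrite $\ap_v(\ap_f(p))$ as $\ap_{v\circ f}(p)$. Note that $\ap_v(\lembed{s})\defineq\lembed{\ApCong(v,s)}$ by the definition of $\ap$.
\item For $\mathsf{HSTrans}(e,d)$, first convert $\lembed{\trans(s,t)}$ into $\lembed s\concatop\lembed t$ with $\trans\mathsf{Step}_2$ on both vertical edges. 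Then paste $\whiskR(N_e(p),\lembed{\llbracket d\rrbracket y})$ with $\whiskL(\lembed{\llbracket e\rrbracket x},N_d(p))$, reassociating with $\assoc_2$ in between. This is the usual vertical pasting of naturality squares, and it closes with $\trans\mathsf{Step}_2$ reversed.
\item For $\mathsf{HSSym}(e)$, rewrite $\lembed{\sym(s)}$ as $\lembed{s}\invop$ with $\sym\mathsf{Step}_2$. The needed square $\ap_g(p)\concatop\lembed{s_y}\invop\Rightarrow\lembed{s_x}\invop\concatop\ap_f(p)$ is then obtained by conjugating $N_e(p)$. Left-whisker by $\lembed{s_x}\invop$ and right-whisker by $\lembed{s_y}\invop$. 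Cancel $\lembed{s_x}\invop\concatop\lembed{s_x}$ using $\leftInv_2$ and $\lembed{s_y}\concatop\lembed{s_y}\invop$ using $\rightInv_2$, inserting these cells in reverse via $\sym_2$. Finish with $\leftId_2$ and $\rightId_2$ and the associativity cells.
\end{itemize}

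I expect the symmetry case to be the main obstacle, because it is the only one that is not a direct pasting. The conjugation needs a long but routine chain of $\assoc_2$, whiskering and inverse cells, so the boundaries must be tracked carefully through each reassociation. I would first write it as a small derived lemma, namely conjugation of an arbitrary square by singleton inverses, and then instantiate it.
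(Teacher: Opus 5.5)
Your proposal is correct and follows the paper's proof essentially clause for clause:
\begin{itemize}
\item reflexivity by $\sym_2(\mathsf{CRefl}_2(\ap_f(p)))$;
\item the beta, eta and lambda-congruence cases by the path-level squares $\Beta_2$, $\Eta_2$ (after $\ap\mathsf{CongId}$) and $\LamCong_2(e,p)$;
\item application by $\ap\mathsf{CongStep}_2$ together with the composition and concatenation laws of $\ap$;
\item transitivity by expanding the $\trans$ edges with $\trans\mathsf{Step}_2$ and pasting the two recursive squares;
\item symmetry by conjugating the recursive square with the singleton inverses and cancelling the inserted pairs.
\end{itemize}
Packaging the symmetry case as a separate conjugation lemma is only an organizational difference. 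Also, $\lembed{s}\invop\defineq\lembed{\sym(s)}$ already holds definitionally, so your $\sym\mathsf{Step}_2$ rewrite there is harmless but not needed.
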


\begin{proof}
Induct on $e$. The reflexivity case uses the symmetry of $\mathsf{CRefl}_2(\ap_f(p))$. The beta case uses $\Beta_2(u,p)$. For eta, first replace $\ap_{\IdFun}(p)$ by $p$ using congruence functoriality and then orient the resulting eta square as in~\eqref{eq:certnat}. The application case applies $\ap_v$ to the recursive square and uses the composition and concatenation laws of $\ap$. The lambda-congruence case is precisely $\LamCong_2(e,p)$: its boundaries contain $\LamCong(\llbracket e\rrbracket)$, so the required global presentation is explicit.

For symmetry, invert the two vertical steps of the recursive square, paste it in the reverse direction, and cancel the inserted inverse pairs. For transitivity, replace each $\trans$ step by its two-step path using $\trans\mathsf{Step}_2$, compose the two recursive squares, and compress the opposite boundary back to a $\trans$ step. These constructions are the total functions $\EHSRefl$, $\EHSBeta$, $\EHSEta$, $\EHSApCong$, $\EHSLamCong$, $\EHSSym$, and $\EHSTrans$. Their recursion is assembled by \code{evalHomotopyStepI}.
\end{proof}

\begin{corollary}[Naturality of evaluated step homotopies]\label{thm:nattransi}
For every presented step homotopy $e$ and every path $p$, the family $\llbracket e\rrbracket$ has the square
\begin{center}
\begin{tikzcd}
fx\arrow[r,"{\ap_f(p)}"]\arrow[d,"{\lembed{\llbracket e\rrbracket x}}"'] & fy\arrow[d,"{\lembed{\llbracket e\rrbracket y}}"]\\
gx\arrow[r,"{\ap_g(p)}"'] & gy.
\end{tikzcd}
\end{center}
\end{corollary}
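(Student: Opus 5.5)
The corollary is a direct projection from the certificate of Theorem~\ref{prop:evalhsi}. Given a presented step homotopy $e:\HomotopyStepI(f,g)$, I will form the canonical certificate $\ev(e):\EvalHSI(e)$. Its first field, \code{evalHSIHom}, is by that theorem the family $\llbracket e\rrbracket$. Its second field, \code{evalHSINat}, supplies for every path $p:\Path(x,y)$ a cell of type \eqref{eq:certnat} with $h=\llbracket e\rrbracket$, namely
\[
\StepTwo\bigl(\ap_f(p)\concatop\lembed{\llbracket e\rrbracket y},\ \lembed{\llbracket e\rrbracket x}\concatop\ap_g(p)\bigr).
\]
This is exactly the displayed square. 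The top edge followed by the right edge is one boundary, and the left edge followed by the bottom edge is the other. So the proof is just the projection $\code{evalHSINat}(\ev(e))\,p$.

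The only point needing care is that the family stored in the record coincides with $\llbracket e\rrbracket$ as used in the statement. Definition~\ref{def:evalhsi} stresses that the index $e$ does not constrain the first field by an evaluation equation. So I will not argue from an arbitrary record. I will use the specific certificate $\ev(e)$ built by \code{evalHomotopyStepI}. I will check, by the same induction on $e$ that produces the clauses listed in Theorem~\ref{prop:evalhsi}, that its first projection reduces definitionally to the structural evaluator \code{evalHomotopyStepIHom}$(e)$.

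Concretely, in each of the seven cases (reflexivity, beta, eta, application congruence, symmetry, transitivity, lambda congruence), the constructor $\EHSRefl,\dots,\EHSTrans$ places the clause's step family in its first field. Those steps are $\refl$, $\Beta(u,x)$, $\Eta(t)$, $\ApCong(v,\llbracket e\rrbracket x)$, $\sym(\llbracket e\rrbracket x)$, $\trans(\llbracket e\rrbracket x,\llbracket d\rrbracket x)$ and $\ApCong(\lambda z.\,zx,\LamCong(\llbracket e\rrbracket))$, where the recursive subterms are supplied by the induction hypotheses. The identification therefore holds by $\defineq$, and the boundaries of the projected cell match the square syntactically, so no transport along a host equality is needed.

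I expect this bookkeeping to be the main, though mild, obstacle. It is most delicate in the symmetry and transitivity cases, where the certificate's cell is assembled by pasting and compression. There one must confirm that the final boundary is literally $\lembed{\sym(\ldots)}$ or $\lembed{\trans(\ldots)}$ rather than an expanded two-step path. Theorem~\ref{prop:evalhsi} already asserts the cells in the exact form~\eqref{eq:certnat}, so I will rely on it, and the corollary follows.
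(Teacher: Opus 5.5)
Your proposal is correct and matches the paper's proof, which applies \code{evalHSINat} to the canonical certificate $\ev(e)$ constructed in Theorem~\ref{prop:evalhsi}. Your added check that the stored family coincides definitionally with $\llbracket e\rrbracket$ is a sound precaution that the paper leaves implicit; it remarks only that the wrapper \code{NatStepI} projects this field from any certificate.
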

\begin{proof}
Apply \code{evalHSINat} to the certificate constructed in Theorem~\ref{prop:evalhsi}. The more general wrapper \code{NatStepI} accepts any certificate and projects this field. Thus the wrapper does not assume that all pointwise families are natural; the theorem constructs the required certificate for the specified grammar.
\end{proof}

\begin{definition}[Evaluation of function-space presentations]\label{def:evalhpi}
The relation $\EvalHPI(e,hp)$ is generated by
\[
\EHNilI:\EvalHPI(\mathsf{HNilI},\mathsf{HNil})
\]
and, given $c:\EvalHSI(e_0)$ and $d:\EvalHPI(e_1,hp)$, by
\[
\EHSeqI(c,d):\EvalHPI(\mathsf{HSeqI}(e_0,e_1),\mathsf{HSeq}(\mathsf{evalHSIHom}(c),hp)).
\]
Applying the canonical step evaluator recursively gives, for every $e:\HomotopyPathI(f,g)$, a function-space path $hp$ together with $\EvalHPI(e,hp)$ (\code{evalHomotopyPathI}).
\end{definition}

The evaluated family $|hp|$ can now be studied without any global naturality premise. The next section records the resulting pastings as sequences of 2-cells.

\section{Sequences of 2-cells and transport}\label{sec:path2}

\begin{definition}[2-paths]\label{def:path2}\label{def:path2ops}
For parallel $p,q$, the family $\PathTwo(p,q)$ is generated by an empty sequence $\Nil_2:\PathTwo(p,p)$ and
\[
\SeqTwo(\alpha,R):\PathTwo(p,r)
\quad(\alpha:\StepTwo(p,q),\ R:\PathTwo(q,r)).
\]
Embedding of a cell, concatenation $\concatTwo$, reversal $\invTwo$, and left and right whiskering are defined by recursion on this sequence. They use, respectively, $\SeqTwo$, $\sym_2$, and the whiskerings of Section~\ref{sec:algebra}.
\end{definition}

Since $\StepTwo$ already has $\trans_2$, finite pastings could also be composed into one $\StepTwo$ witness. $\PathTwo$ is useful for retaining a sequence of such cells and recursing on that sequence; it is not necessary for the mere existence of a finite composite. No equality theory of 2-paths, or family of 3-cells comparing their presentations, is introduced.

\begin{theorem}[Direct naturality for function-space evaluations]\label{thm:nattranshp2}
Given $e:\HomotopyPathI(f,g)$, $hp:\HomotopyPath(f,g)$ and $c:\EvalHPI(e,hp)$, for every $p:\Path(x,y)$ there is a constructed witness
\[
\PathTwo(\ap_f(p)\concatop|hp|y,\ |hp|x\concatop\ap_g(p)).
\]
\end{theorem}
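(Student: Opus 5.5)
We argue by induction on the derivation $c:\EvalHPI(e,hp)$, which simultaneously fixes the shapes of $e$ and $hp$. For every $p:\Path(x,y)$ we build a $\PathTwo$ witness with the stated boundary, and we build it as an explicit sequence of cells joined by $\concatTwo$.

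\emph{Base case} $c=\EHNilI$. Here $hp=\mathsf{HNil}$ and $f\defineq g$, so $|hp|x\defineq\Nil$. The goal becomes $\PathTwo(\ap_f(p)\concatop\Nil,\ \Nil\concatop\ap_f(p))$. The right side reduces definitionally to $\ap_f(p)$. The embedded cell $\rightId_2(\ap_f(p))$ from Theorem~\ref{thm:id2} therefore closes the goal as a one-element 2-path.

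\emph{Sequence case} $c=\EHSeqI(c_0,d)$ with $c_0:\EvalHSI(e_0)$ and $d:\EvalHPI(e_1,hp')$. Write $h=\mathsf{evalHSIHom}(c_0):\HomotopyStep(f,g)$ and $hp'\colon g\to k$. Then $|hp|x\defineq\Seq(hx,|hp'|x)\defineq\lembed{hx}\concatop|hp'|x$, where the last identification holds definitionally by Lemma~\ref{lem:distrib}. We paste four stages.
\begin{enumerate}[(1)]
\item Reassociate: $\ap_f(p)\concatop(\lembed{hy}\concatop|hp'|y)$ is carried to $(\ap_f(p)\concatop\lembed{hy})\concatop|hp'|y$ by $\sym_2(\assoc_2)$.
\item Apply the certified square $\mathsf{evalHSINat}(c_0)(p)$, right-whiskered by $|hp'|y$. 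This yields $(\lembed{hx}\concatop\ap_g(p))\concatop|hp'|y$.
\item Reassociate with $\assoc_2$ to get $\lembed{hx}\concatop(\ap_g(p)\concatop|hp'|y)$.
\item Take the recursive 2-path $\mathrm{IH}(d,p):\PathTwo(\ap_g(p)\concatop|hp'|y,\ |hp'|x\concatop\ap_k(p))$ and left-whisker it by $\lembed{hx}$, using the $\PathTwo$ whiskering of Definition~\ref{def:path2}. This reaches $\lembed{hx}\concatop(|hp'|x\concatop\ap_k(p))$. One more $\sym_2(\assoc_2)$ turns it into $(\lembed{hx}\concatop|hp'|x)\concatop\ap_k(p)$, which is definitionally $|hp|x\concatop\ap_k(p)$.
\end{enumerate}
Concatenating these stages with $\concatTwo$ gives the witness.

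The expected obstacle is bookkeeping of definitional unfoldings, not any mathematics. The boundary $\Seq(hx,|hp'|x)\concatop\ap_k(p)$ must be recognised as $\Seq(hx,|hp'|x\concatop\ap_k(p))$. Since concatenation recurses on its first argument, $\lembed{hx}\concatop r\defineq\Seq(hx,r)$. This means stages (3)--(4) can in fact skip the reassociations: left whiskering by a singleton is just elementwise $\mathsf{Whisk}_{L}\mathsf{Step}_{2}$ on the recursive 2-path. Only stages (1)--(2) then need a genuine associativity cell.

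I would first try to write the whole proof as $\SeqTwo$ of the two cells from stages (1)--(2), followed by the elementwise left whiskering of $\mathrm{IH}(d,p)$. I would fall back to explicit $\assoc_2$ cells only wherever the type checker fails to unify the boundaries.
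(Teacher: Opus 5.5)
Your proposal is correct and follows the paper's route: induction on the evaluation evidence $c$, pasting the head certificate from \code{evalHSINat} with the recursively obtained strip for the tail. The only difference is orientation: the paper builds the converse strip and then reverses the finite sequence with $\invTwo$, whereas you build the displayed orientation directly (using $\sym_2$ of the associativity cell where needed), which is equally valid and slightly more direct.
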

\begin{proof}
Induct on the evaluation evidence to construct the converse strip, then reverse that finite sequence with $\invTwo$. Thus the exported \code{NatPathI} has the displayed orientation. The construction uses only certificates produced by the grammar, without a premise for arbitrary pointwise families.
\end{proof}

\subsection{Conjugating paths}

\begin{definition}[Transport]\label{def:transp}
Fix endpoints $x,y:A$. For $hp:\HomotopyPath(f,g)$ define
\[
T_{hp}:\Path(fx,fy)\to\Path(gx,gy)
\]
by
\begin{align*}
T_{\mathsf{HNil}}(q)&=q,\\
T_{\mathsf{HSeq}(h,hp')}(q)
 &=T_{hp'}(\lembed{hx}\invop\concatop(q\concatop\lembed{hy})).
\end{align*}
The implementation \code{Transp2viaHP} also receives $p:\Path(x,y)$, which supplies the endpoints; the recursion uses $q$ and the components of $hp$. This is transport by conjugation in our two-dimensional path algebra, not the dependent eliminator for an MLTT/Idris host identity type.
\end{definition}

The operation follows the three other sides of a square:
\begin{center}
\begin{tikzcd}
fx\arrow[r,"q"]\arrow[d,"{\lembed{hx}}"']&fy\arrow[d,"{\lembed{hy}}"]\\
gx\arrow[r,"{\lembed{hx}\invop\concatop(q\concatop\lembed{hy})}"']&gy.
\end{tikzcd}
\end{center}
The bottom edge is a definition. A naturality or comparison cell is an additional result.

\begin{proposition}[Transport respects 2-paths]\label{prop:transpresp}
Every $R:\PathTwo(q,r)$ induces $\PathTwo(T_{hp}(q),T_{hp}(r))$.
\end{proposition}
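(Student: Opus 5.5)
We prove the proposition by structural recursion on the function-space path $hp$. The statement is quantified over all $q,r:\Path(fx,fy)$ and all $R:\PathTwo(q,r)$, so we generalize over these before recursing on $hp$. The key tool is that the conjugation $q\mapsto \lembed{hx}\invop\concatop(q\concatop\lembed{hy})$ sends 2-paths to 2-paths. This follows from the whiskering operations on $\PathTwo$ in Definition~\ref{def:path2ops}.

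\textbf{Base case.} For $hp=\mathsf{HNil}$ we have $T_{\mathsf{HNil}}(q)\defineq q$ and $T_{\mathsf{HNil}}(r)\defineq r$. So $R$ itself is the required 2-path.

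\textbf{Step case.} Let $hp=\mathsf{HSeq}(h,hp')$. Given $R:\PathTwo(q,r)$, we proceed in three moves.
\begin{enumerate}[(1)]
\item Right-whisker $R$ by $\lembed{hy}$ to obtain $\PathTwo(q\concatop\lembed{hy},\ r\concatop\lembed{hy})$.
\item Left-whisker the result by $\lembed{hx}\invop$ to obtain $\PathTwo(\lembed{hx}\invop\concatop(q\concatop\lembed{hy}),\ \lembed{hx}\invop\concatop(r\concatop\lembed{hy}))$.
\item Apply the induction hypothesis for $hp'$ to this 2-path. By the defining clause of $T_{\mathsf{HSeq}(h,hp')}$, its type is exactly $\PathTwo(T_{hp}(q),T_{hp}(r))$.
\end{enumerate}
No reassociation is needed, because the whiskering boundaries match the parenthesization chosen in Definition~\ref{def:transp}.

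The 2-path whiskerings are defined by recursion on $R$: $\Nil_2$ goes to $\Nil_2$, and $\SeqTwo(\alpha,R')$ goes to $\SeqTwo$ of the whiskered cell followed by the recursive call. The cell-level whiskerings they use are $\whiskR(\alpha,\lembed{hy})$ and $\whiskL(\lembed{hx}\invop,\alpha)$ from Definition~\ref{def:whisk}.

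An alternative is to avoid the induction on $R$ entirely by doing the recursion on $hp$ at the level of single cells. One would first show that every $\alpha:\StepTwo(q,r)$ gives $\StepTwo(T_{hp}(q),T_{hp}(r))$, using the same two whiskerings. Then one maps this operation entry-wise over $R$. Either order works, and the first is closer to how the paper uses $\PathTwo$ operations.

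\textbf{Main obstacle.} The only delicate point is the bookkeeping of endpoint indices in the dependent types. The induction hypothesis must be stated for the shifted endpoints $gx,gy$ produced by the first conjugation, and for arbitrary $q,r$, not the fixed ones. This is the reason we generalize before recursing. Once that is arranged, every step is an instance of a whiskering already defined, so no new generator of $\StepTwo$ is required.
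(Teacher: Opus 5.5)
Your proof is correct and matches the paper's argument: induction on $hp$, returning $R$ in the $\mathsf{HNil}$ case. In the $\mathsf{HSeq}(h,hp')$ case you right-whisker $R$ by $\lembed{hy}$, left-whisker by $\lembed{hx}\invop$, and apply the recursive transport for the tail, which is exactly what \code{Transp2viaHPResp} does without any naturality certificate.
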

\begin{proof}
Induct on $hp$. The empty case returns $R$. In a sequence case, right-whisker $R$ by $\lembed{hy}$, left-whisker it by $\lembed{hx}\invop$, and apply the recursive transport to the tail. This is \code{Transp2viaHPResp}; it needs no naturality certificate.
\end{proof}

\begin{lemma}[Conjugating an evaluated step]\label{lem:conjap}
If $c:\EvalHSI(e)$ has family $h$, then for every $p:\Path(x,y)$ there is a cell
\[
\StepTwo(\lembed{hx}\invop\concatop(\ap_f(p)\concatop\lembed{hy}),\ \ap_g(p)).
\]
\end{lemma}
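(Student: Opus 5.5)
The plan is to start from the naturality certificate provided by $c$ and cancel the left vertical edge. Write $N:=\mathsf{evalHSINat}(c,p):\StepTwo(\ap_f(p)\concatop\lembed{hy},\ \lembed{hx}\concatop\ap_g(p))$; this is the square of \eqref{eq:certnat}. The target cell will then be assembled as a chain of $\trans_2$ compositions in three stages.

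First, left-whisker $N$ by $\lembed{hx}\invop$ using $\whiskL$ from Definition~\ref{def:whisk}. This produces
\[
\StepTwo\bigl(\lembed{hx}\invop\concatop(\ap_f(p)\concatop\lembed{hy}),\ \lembed{hx}\invop\concatop(\lembed{hx}\concatop\ap_g(p))\bigr),
\]
whose source is already the required left-hand side.

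Second, reassociate the target with $\sym_2(\assoc_2(\lembed{hx}\invop,\lembed{hx},\ap_g(p)))$ from Theorem~\ref{thm:assoc}. This yields $(\lembed{hx}\invop\concatop\lembed{hx})\concatop\ap_g(p)$.

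Third, cancel the inverse pair. Right-whisker the elementary cell $\leftInv_2(hx)$ by $\ap_g(p)$ to reach $\Nil\concatop\ap_g(p)$, which is definitionally $\ap_g(p)$ because concatenation recurses on its first argument. Alternatively, since $\lembed{hx}\concatop\ap_g(p)\defineq\Seq(hx,\ap_g(p))$, the second and third stages collapse into the single cell $\cancelL(hx,\ap_g(p))$ of Lemma~\ref{lem:cancel}, whose source is literally $\lembed{hx}\invop\concatop\Seq(hx,\ap_g(p))$. I would use this shortcut, so the whole construction becomes
\[
\trans_2\bigl(\whiskL(\lembed{hx}\invop,N),\ \cancelL(hx,\ap_g(p))\bigr).
\]

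The main obstacle I expect is the boundary matching, not the 2-cell algebra. The target of $\whiskL$ must coincide definitionally with the source of $\cancelL$. This requires that $\lembed{hx}\concatop q$ unfold to $\Seq(hx,q)$; it does, since $\Nil\concatop q=q$ holds by the defining recursion. It also requires that the left-hand side be parenthesized exactly as the statement writes it, namely $\lembed{hx}\invop\concatop(\ap_f(p)\concatop\lembed{hy})$, which matches the output of $\whiskL$ with no extra $\assoc_2$.

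If the implicit endpoint indices of $\cancelL$ fail to line up in the host checker, the fallback is to use the explicit three-stage route with $\assoc_2$ and $\whiskR(\leftInv_2(hx),\ap_g(p))$. No induction on $p$ or on $e$ is needed, because the certificate $c$ already carries the naturality square for arbitrary $p$.
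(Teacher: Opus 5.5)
Your proposal is correct and follows the paper's proof: left-whisker the naturality cell of $c$ by $\lembed{hx}\invop$, then reassociate, apply the left-inverse cell, and discard the empty prefix. Your $\cancelL(hx,\ap_g(p))$ shortcut just packages those final steps, since $\cancelL$ is itself $\whiskR(\leftInv_2(s),q)$ and $\lembed{hx}\concatop\ap_g(p)$ unfolds definitionally to $\Seq(hx,\ap_g(p))$.
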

\begin{proof}
Use the naturality cell of $c$ under the left frame $\lembed{hx}\invop$. It transforms the source into
$\lembed{hx}\invop\concatop(\lembed{hx}\concatop\ap_g(p))$.
Reassociate, apply the left-inverse cell, and remove the empty prefix. This is \code{ConjApcongStep2Eval}.
\end{proof}

\begin{theorem}[Transport of congruence]\label{thm:transpapcong}
For $c:\EvalHPI(e,hp)$ and $p:\Path(x,y)$ there is a witness
\[
\PathTwo(T_{hp}(\ap_f(p)),\ \ap_g(p)).
\]
\end{theorem}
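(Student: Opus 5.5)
We argue by induction on the evaluation evidence $c:\EvalHPI(e,hp)$, generalizing over the function $f$. The statement is phrased for the image $\ap_f(p)$ of a fixed $p:\Path(x,y)$, so the induction hypothesis must be available for every source function $f'$ and every evaluation $c'$ in the tail. This is possible because $p$ stays fixed while the functions vary along the function-space path.

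\textbf{Base case.} Here $c=\EHNilI$. Then $hp=\mathsf{HNil}$ and $f\defineq g$. By Definition~\ref{def:transp}, $T_{\mathsf{HNil}}(\ap_f(p))\defineq\ap_f(p)$, so the witness is $\Nil_2$.

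\textbf{Sequence case.} Here $c=\EHSeqI(c_0,d)$ with $c_0:\EvalHSI(e_0)$, where $e_0:\HomotopyStepI(f,g')$, and $d:\EvalHPI(e_1,hp')$ with $hp':\HomotopyPath(g',g)$. Write $h=\mathsf{evalHSIHom}(c_0)$, so that $hp=\mathsf{HSeq}(h,hp')$. Unfolding the transport gives
\[
T_{hp}(\ap_f(p))\defineq T_{hp'}\bigl(\lembed{hx}\invop\concatop(\ap_f(p)\concatop\lembed{hy})\bigr).
\]
We then proceed in three steps.
\begin{enumerate}[(1)]
\item Lemma~\ref{lem:conjap} applied to $c_0$ and $p$ gives a cell
\[
\StepTwo\bigl(\lembed{hx}\invop\concatop(\ap_f(p)\concatop\lembed{hy}),\ \ap_{g'}(p)\bigr).
\]
Embedding it as a one-entry 2-path gives $R:\PathTwo\bigl(\lembed{hx}\invop\concatop(\ap_f(p)\concatop\lembed{hy}),\ \ap_{g'}(p)\bigr)$.
\item Proposition~\ref{prop:transpresp} pushes $R$ through $T_{hp'}$. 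This yields
\[
\PathTwo\bigl(T_{hp}(\ap_f(p)),\ T_{hp'}(\ap_{g'}(p))\bigr).
\]
\item The induction hypothesis for $d$, instantiated at source function $g'$ and the same $p$, gives
\[
\PathTwo\bigl(T_{hp'}(\ap_{g'}(p)),\ \ap_g(p)\bigr).
\]
\end{enumerate}
We finish by concatenating the 2-paths from steps (2) and (3) with $\concatTwo$.

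\textbf{Main obstacle.} The only delicate point is step (2): we must check that the endpoint indices line up definitionally. The transport in Proposition~\ref{prop:transpresp} is indexed by the same fixed $x,y$ (supplied by $p$ in \code{Transp2viaHP}), and the intermediate function $g'$ produced by $\EHSeqI$ must match the source of $hp'$. Both hold by the typing of $\EHSeqI$.

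No naturality assumption on arbitrary pointwise families is needed. The only naturality input is the certified cell inside $c_0$, which Lemma~\ref{lem:conjap} already repackages into conjugated form.
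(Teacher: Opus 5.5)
Your proposal is correct and follows essentially the same route as the paper: induction on $c$, with $\Nil_2$ in the empty case. In the sequence case you, like the paper, apply Lemma~\ref{lem:conjap}, embed the resulting cell, push it through the tail with Proposition~\ref{prop:transpresp}, and concatenate with the recursive witness (\code{Transp2viaHPApcongEval}).
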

\begin{proof}
Induct on $c$. The empty case is $\Nil_2$. For a sequence, Lemma~\ref{lem:conjap} reduces the first conjugation to the congruence for the intermediate function. Embed that cell and transport it through the remaining function-space path using Proposition~\ref{prop:transpresp}; concatenate the resulting 2-path with the recursive witness. This is \code{Transp2viaHPApcongEval}.
\end{proof}

\begin{theorem}[Recovering a square from transport]\label{thm:untransp}
Let $hp:\HomotopyPath(f,g)$, $q:\Path(fx,fy)$ and $r:\Path(gx,gy)$. From $R:\PathTwo(T_{hp}(q),r)$ one constructs
\[
\PathTwo(q\concatop|hp|y,\ |hp|x\concatop r).
\]
\end{theorem}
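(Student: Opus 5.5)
We induct on $hp$, generalising over $q$ and $r$, since the recursion of $T$ changes the path argument at each stage. In the case $hp=\mathsf{HNil}$ we have $T_{\mathsf{HNil}}(q)\defineq q$, $|\mathsf{HNil}|y\defineq\Nil$ and $|\mathsf{HNil}|x\defineq\Nil$. The target is therefore $\PathTwo(q\concatop\Nil,\ \Nil\concatop r)$, and the right side reduces to $r$. We build this 2-path by prefixing $R$ with the cell $\rightId_2(q):\StepTwo(q\concatop\Nil,q)$ of Theorem~\ref{thm:id2}, via $\SeqTwo$.

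In the case $hp=\mathsf{HSeq}(h,hp')$, write $q'=\lembed{hx}\invop\concatop(q\concatop\lembed{hy})$, so that $T_{hp}(q)\defineq T_{hp'}(q')$. The hypothesis $R$ then has type $\PathTwo(T_{hp'}(q'),r)$. The induction hypothesis for $hp'$, applied with $q'$ in place of $q$, gives
\[
R':\PathTwo(q'\concatop|hp'|y,\ |hp'|x\concatop r).
\]
By the definitions we have $|hp|y\defineq\Seq(hy,|hp'|y)$, which is $\lembed{hy}\concatop|hp'|y$ up to the definitional unfolding of $\concatop$ on a singleton. Likewise $|hp|x\defineq\lembed{hx}\concatop|hp'|x$. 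So it remains to build
\[
\PathTwo\bigl(q\concatop(\lembed{hy}\concatop|hp'|y),\ \lembed{hx}\concatop(|hp'|x\concatop r)\bigr).
\]

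We would construct this as a chain of three pieces joined by $\concatTwo$. The first piece is a prefix 2-path from $q\concatop(\lembed{hy}\concatop|hp'|y)$ to $\lembed{hx}\concatop(q'\concatop|hp'|y)$. It proceeds in three moves:
\begin{enumerate}[(a)]
\item Insert the inverse pair: apply $\sym_2$ of $\whiskR(\rightInv_2(\lembed{hx}),\cdot)$, or equivalently the reversed left-inverse/cancellation cell of Theorem~\ref{thm:inv2} and Lemma~\ref{lem:cancel}, to rewrite $q\concatop\lembed{hy}$ as $\lembed{hx}\concatop(\lembed{hx}\invop\concatop(q\concatop\lembed{hy}))$.
\item Reassociate with $\assoc_2$ and its symmetric form.
\item Whisker appropriately with $\whiskL$ and $\whiskR$.
\end{enumerate}
The second piece is $\whiskL(\lembed{hx},R')$, where whiskering a 2-path is the operation of Definition~\ref{def:path2}. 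It takes $\lembed{hx}\concatop(q'\concatop|hp'|y)$ to $\lembed{hx}\concatop(|hp'|x\concatop r)$. The third piece is one final $\assoc_2$, if needed, to match the association on the right-hand boundary.

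The main obstacle is bookkeeping in the prefix 2-path. The cancellation has to be inserted in the reverse direction, using the $\sym_2$ of a right-inverse cell whiskered by $q\concatop\lembed{hy}$. The boundary then has to be regrouped so that $q'$ appears as an exact subterm, since the induction hypothesis is only applicable to $q'$ literally. We would first try the route $q\concatop(\lembed{hy}\concatop|hp'|y)\to (q\concatop\lembed{hy})\concatop|hp'|y$ by $\sym_2(\assoc_2)$. Next we would right-whisker by $|hp'|y$ the cell $\sym_2$ of the composite $\lembed{hx}\concatop(\lembed{hx}\invop\concatop t)\to(\lembed{hx}\concatop\lembed{hx}\invop)\concatop t\to\Nil\concatop t\defineq t$, with $t=q\concatop\lembed{hy}$. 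Finally $\assoc_2$ gives $\lembed{hx}\concatop(q'\concatop|hp'|y)$. No naturality certificate is needed, consistent with the statement's arbitrary $hp$.
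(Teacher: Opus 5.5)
Your proposal is correct and follows essentially the same route as the paper. The paper inducts on $hp$, removes the empty suffix with $\rightId_2$ in the $\mathsf{HNil}$ case, and in the $\mathsf{HSeq}$ case applies the induction hypothesis to the conjugated path, left-whiskers by $\lembed{hx}$, and uses the right-inverse cell (reversed) with reassociation to match the boundaries. One small slip: the aside in (a) calling the cancellation cell of Lemma~\ref{lem:cancel} "equivalent" is wrong, since that cell removes $\lembed{hx}\invop\concatop\lembed{hx}$ rather than the pair $\lembed{hx}\concatop\lembed{hx}\invop$; your detailed route correctly uses $\rightInv_2$ instead.
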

\begin{proof}
For $\mathsf{HNil}$, first remove the empty suffix of $q$ and then apply $R$. For $\mathsf{HSeq}(h,hp')$, apply the induction hypothesis to the conjugated path and the tail. Prepend $\lembed{hx}$ to that square, cancel the adjacent pair $\lembed{hx}\concatop\lembed{hx}\invop$ using the right-inverse cell, and reassociate the remaining prefixes and suffixes. The resulting boundaries are exactly those displayed. The implementation is \code{Untransp2viaHP}.
\end{proof}

\begin{corollary}[Naturality via transport]\label{thm:nattransvia}
For $c:\EvalHPI(e,hp)$, transport of $\ap_f(p)$ followed by Theorem~\ref{thm:untransp} constructs
\[
\PathTwo(\ap_f(p)\concatop|hp|y,\ |hp|x\concatop\ap_g(p)).
\]
\end{corollary}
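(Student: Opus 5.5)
The construction composes two results already established: Theorem~\ref{thm:transpapcong} and Theorem~\ref{thm:untransp}. No new induction is needed at this level; all recursion is delegated to those two results.

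First, apply Theorem~\ref{thm:transpapcong} to the evaluation certificate $c:\EvalHPI(e,hp)$ and the given $p:\Path(x,y)$. This yields a 2-path
\[
R_p:\PathTwo(T_{hp}(\ap_f(p)),\ \ap_g(p)).
\]
That theorem itself inducts on $c$. At each sequence entry it uses Lemma~\ref{lem:conjap}, which is where the certified naturality field \code{evalHSINat} enters, together with Proposition~\ref{prop:transpresp} to push the resulting cell through the tail of the function-space path.

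Second, instantiate Theorem~\ref{thm:untransp} with $q:=\ap_f(p):\Path(fx,fy)$, $r:=\ap_g(p):\Path(gx,gy)$, and $R:=R_p$. Its hypothesis is then exactly a 2-path $\PathTwo(T_{hp}(q),r)$, so it returns
\[
\PathTwo(\ap_f(p)\concatop|hp|y,\ |hp|x\concatop\ap_g(p)),
\]
which is the displayed statement.

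The one point to check is that the endpoints line up. The function $g$ appearing as the target of $hp$ in Theorem~\ref{thm:untransp} must be the same $g$ that indexes $e$ and $hp$ in the certificate $c$. The path $hp$ used to compute $T_{hp}$ must be the same $hp$ carried in the index of $c$, so the source boundary of $R_p$ matches the required hypothesis of Theorem~\ref{thm:untransp} without any rewriting. Both hold by the typing of $\EvalHPI(e,hp)$.

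I expect this step to be the only potential obstacle, and it is bookkeeping rather than mathematics. If the implementation fixes the endpoints $x,y$ through the extra argument $p$ (as \code{Transp2viaHP} does), the same $p$ should be supplied to both theorems so that the two transports are literally the same term. In contrast with Theorem~\ref{thm:nattranshp2}, no reversal by $\invTwo$ is needed: the orientation comes directly from \code{Untransp2viaHP}.
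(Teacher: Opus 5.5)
Your proposal is correct and matches the paper's proof: you take the witness $\PathTwo(T_{hp}(\ap_f(p)),\ap_g(p))$ from Theorem~\ref{thm:transpapcong} and feed it to Theorem~\ref{thm:untransp} with $q=\ap_f(p)$ and $r=\ap_g(p)$. As you note, the boundary order of that theorem already matches the statement, so no $\invTwo$ is needed.
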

\begin{proof}
Take $q=\ap_f(p)$, $r=\ap_g(p)$ and the witness of Theorem~\ref{thm:transpapcong}. Theorem~\ref{thm:untransp} has exactly the displayed boundary order. This is \code{NatTransHSviaTranspEval}.
\end{proof}

The direct, function-space, and transport constructions use the same boundary orientation: the composite through $f$ before $h_y$ is related to the composite through $g$ after $h_x$. The function-space route uses $\invTwo$ only to present its witness in that common direction; no equality between the resulting 2-path presentations is claimed.

\section{Consistency and intensionality: beta and eta evidence are distinct}\label{sec:consistency}

This section proves the negative results promised in the introduction: the theory is \emph{consistent} (the 2-cells do not collapse everything), and in particular the $\beta$- and $\eta$-contractions are \emph{provably distinct evidence} --- while in the native syntax of Idris, based on MLTT, the same two contractions are identified by definitional equality. The rule set admits many cells, but does not relate every pair of parallel paths. A structural grading into $\Ftwo=\{0,1\}$ proves this without invoking a topological model. We write $\oplus$ for addition modulo two (Boolean XOR).

\begin{definition}[Structural parity]\label{def:parity}
Define the step grading $\parity$ by
\begin{align*}
\parity(\Beta(f,x))&=1,&\parity(\Eta(f))&=0,\\
\parity(\ApCong(f,s))&=\parity(s),&\parity(\LamCong(h))&=0,\\
\parity(\refl)&=0,&\parity(\sym(s))&=\parity(s),\\
\parity(\trans(s,t))&=\parity(s)\oplus\parity(t).
\end{align*}
For paths, set $\parity(\Nil)=0$ and $\parity(\Seq(s,p))=\parity(s)\oplus\parity(p)$.
\end{definition}

This grading counts beta tags along the recursive branches exposed by the definition. In particular, it assigns zero to a lambda-congruence step without inspecting its function-valued premise. It is not the number of all beta reductions occurring in every term or hidden inside every witness. That choice is what makes the lambda-congruence squares compatible with the invariant.

\begin{lemma}[Compositionality]\label{lem:paritycomp}
For well-typed paths and functions,
\begin{align*}
\parity(p\concatop q)&=\parity(p)\oplus\parity(q),&
\parity(p\invop)&=\parity(p),\\
\parity(\ap_f(p))&=\parity(p),&
\parity(\lembed s)&=\parity(s).
\end{align*}
\end{lemma}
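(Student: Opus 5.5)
We prove the four equations by structural recursion, in the order that lets later clauses use earlier ones. The last identity, $\parity(\lembed s)=\parity(s)$, is immediate: $\lembed s=\Seq(s,\Nil)$, so its parity is $\parity(s)\oplus 0=\parity(s)$ by the path clause of Definition~\ref{def:parity}. All equalities here are native Boolean equalities in the host, as announced in Section~\ref{sec:philosophy}. So the proof is ordinary MLTT reasoning on $\Ftwo$ and needs no 2-cells.

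\emph{Concatenation.} We induct on $p$, following the definition of $\concatop$, which recurses on its first argument. For $p=\Nil$ we have $\Nil\concatop q\defineq q$, and $0\oplus\parity(q)=\parity(q)$. For $p=\Seq(s,p')$ we have $\Seq(s,p')\concatop q\defineq\Seq(s,p'\concatop q)$. Its parity is therefore $\parity(s)\oplus\parity(p'\concatop q)$. By the induction hypothesis this equals $\parity(s)\oplus(\parity(p')\oplus\parity(q))$, and associativity of XOR rearranges it to the required form. The only auxiliary facts are the Boolean lemmas that $0$ is a left unit for $\oplus$ and that $\oplus$ is associative, each proved by case analysis on Booleans.

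\emph{Inversion.} We induct on $p$. The $\Nil$ case is immediate. For $\Seq(s,p')$, unfold $p\invop\defineq {p'}\invop\concatop\lembed{\sym(s)}$. The concatenation clause just proved gives $\parity({p'}\invop)\oplus\parity(\lembed{\sym(s)})$. The induction hypothesis and the singleton clause turn this into $\parity(p')\oplus\parity(\sym(s))$. By the step clause $\parity(\sym(s))=\parity(s)$, this is $\parity(p')\oplus\parity(s)$, and commutativity of $\oplus$ finishes the case.

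\emph{Congruence.} We induct on $p$ once more. We have $\ap_f(\Nil)\defineq\Nil$. In the sequence case, $\ap_f(\Seq(s,p'))\defineq\Seq(\ApCong(f,s),\ap_f(p'))$, and $\parity(\ApCong(f,s))=\parity(s)$ holds definitionally. Rewriting with the induction hypothesis closes the case.

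The main obstacle is bookkeeping rather than mathematics. The endpoint indices of $\Path$ vary under $\ap_f$ and inversion, so the induction must be stated generally over the endpoints. The inversion case also depends on the concatenation lemma being available first, which fixes the order above. Once the commutativity, associativity and unit lemmas for XOR are in place, every case is a rewrite chain.
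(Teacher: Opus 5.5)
Your proposal is correct and follows the paper's proof. Like the paper, you use structural induction on $p$ for each clause, using the defining equations and the unit, associativity and commutativity laws of $\oplus$, and you close the inversion case by commutativity after invoking the concatenation and singleton clauses. One small point: the singleton case $\parity(\lembed s)=\parity(s)\oplus 0=\parity(s)$ needs the right-unit law $b\oplus 0=b$, not only the path clause, but your later appeal to the XOR unit lemmas covers it.
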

\begin{proof}
Structural induction using the defining equations and the unit, associativity and commutativity laws of XOR. For inversion, the recursive case is the sum of the tail's parity and that of the symmetric head; commutativity restores the original order. The corresponding functions are \code{pathParityConcat}, \code{pathParityInv}, \code{pathParityApcong}, and \code{pathParityLembed}.
\end{proof}

\begin{theorem}[Preservation by all 2-cells]\label{prop:parityinv}
If $\alpha:\StepTwo(p,q)$, then $\parity(p)=\parity(q)$. The same conclusion holds for $R:\PathTwo(p,q)$.
\end{theorem}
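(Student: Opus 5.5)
The proof is by induction on the derivation $\alpha:\StepTwo(p,q)$. In each generator case we compute both boundary parities using Lemma~\ref{lem:paritycomp}, which reduces every boundary expression to a XOR of step parities. The statement for $R:\PathTwo(p,q)$ then follows by a second, trivial induction on the sequence: $\Nil_2$ gives equality by reflexivity, and $\SeqTwo(\alpha,R')$ chains the result for $\alpha$ with the induction hypothesis for $R'$ by transitivity of equality.

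The reduction squares are handled as follows.
\begin{enumerate}[(i)]
\item For $\mathsf{CRefl}^{\mathsf{step}}_2(s)$, both sides have parity $0\oplus\parity(s)$ up to commutativity.
\item For $\BetaTwo(s)$, the source has parity $\parity(s)\oplus 1$, since $\ap$ preserves parity and a singleton $\Beta$ contributes $1$. The target has parity $1\oplus\parity(s)$, so commutativity of XOR closes the case.
\item For $\EtaTwo(s)$, both sides have parity $\parity(s)\oplus 0$ up to order.
\item For $\LamCongTwo(h^I,s)$, the vertical edges $L_a,L_b$ have parity $\parity(\ApCong(\lambda z.\,zt,\LamCong(h)))=\parity(\LamCong(h))=0$. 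This holds regardless of $h$, and it is exactly why the grading does not inspect the premise of $\LamCong$. Both sides therefore reduce to $\parity(s)$.
\end{enumerate}

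The structural generators are handled similarly.
\begin{enumerate}[(i)]
\item $\mathsf{Refl}_2$ is immediate. $\sym_2$ and $\trans_2$ use symmetry and transitivity of Boolean equality on the induction hypotheses.
\item $\mathsf{ReflStep}_2$ gives $0=0$. $\sym\mathsf{Step}_2(s)$ gives $\parity(s)=\parity(\lembed(s)\invop)$ by the inverse clause of Lemma~\ref{lem:paritycomp}. $\trans\mathsf{Step}_2(s,t)$ is the concatenation clause together with $\parity(\lembed s)=\parity(s)$.
\item The cancellation cells need $b\oplus b=0$ for Booleans. $\sym\sym_2$ is definitional, since $\sym$ preserves parity.
\item The $\ap$-functoriality cells use $\parity(\ap_f(p))=\parity(p)$ twice. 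The case $\ap\mathsf{CongStep}_2(u,\alpha)$ uses the induction hypothesis for $\alpha$.
\item $\mathsf{Whisk}_{L}\mathsf{Step}_{2}(\alpha)$ rewrites $\parity(s)\oplus\parity(p)$ by the hypothesis $\parity(p)=\parity(q)$. $\mathsf{Whisk}_{R}\mathsf{Step}_{2}(\alpha,s)$ first uses the concatenation clause to split off $\parity(s)$ and then rewrites in the same way.
\end{enumerate}

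The main obstacle is bookkeeping rather than mathematics. Several boundaries, notably $\lembed(s)\invop$ and the concatenations in the squares, do not reduce definitionally to a normal form, so each case must explicitly rewrite with the lemmas \code{pathParityConcat}, \code{pathParityInv}, \code{pathParityApcong} and \code{pathParityLembed}. Each case then closes with a small XOR identity: unit, commutativity, or $b\oplus b=0$. I would therefore first prove these Boolean facts as standalone lemmas and then dispatch every constructor case uniformly by rewriting. The case where the design choice matters is $\LamCongTwo$; the invariant would fail there if $\parity(\LamCong(h))$ depended on $h$, because $h$ at $a$ and at $b$ could differ in parity.
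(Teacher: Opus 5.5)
Your proof is correct and follows essentially the same route as the paper's: induction on $\alpha$, computing both boundary gradings of each generator via Lemma~\ref{lem:paritycomp} and closing each case with a unit, commutativity or $b\oplus b=0$ identity in $\Ftwo$, followed by a routine induction on the $\PathTwo$ sequence. Your closing remark about $\LamCongTwo$ is mistaken, however: $L_a$ and $L_b$ contain the same step $\LamCong(h)$ and differ only in the outer context $\lambda z.\,zt$, so $\parity(L_a)=\parity(\LamCong(h))=\parity(L_b)$ and the case goes through whatever value $\parity(\LamCong(h))$ is given, since $h$ is never evaluated at $a$ or $b$ in these boundaries.
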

\begin{proof}
Induct on $\alpha$. The generating squares have the following boundary gradings, with $b=\parity(s)$ or $\parity(p)$ as appropriate:
\begin{center}
\begin{tabular}{@{}lll@{}}
\toprule
Generator & First boundary & Second boundary\\
\midrule
Reflexivity commutation & $0\oplus b$ & $b\oplus0$\\
Beta square & $b\oplus1$ & $1\oplus b$\\
Eta square (both variants) & $b\oplus0$ & $0\oplus b$\\
Lambda-congruence square & $0\oplus b$ & $b\oplus0$\\
Inverse cancellation & $b\oplus b$ & $0$\\
\bottomrule
\end{tabular}
\end{center}
Reflexive, symmetric and transitive cells use the corresponding rules of Boolean equality. The structural step-to-path cells respect the defining XOR equations. Double symmetry preserves $b$; identity and composite congruences preserve it by Lemma~\ref{lem:paritycomp}. Applying a function to a cell preserves the two boundary gradings. A fixed prefix or one-step suffix adds the same grading to each side, covering $\mathsf{Whisk}_{L}\mathsf{Step}_{2}$ and $\mathsf{Whisk}_{R}\mathsf{Step}_{2}$; arbitrary whiskering follows from the recursive constructions. These cases exhaust $\StepTwo$ and constitute \code{step2PreservesParity}.

For $\PathTwo$, induction on the cell sequence composes these equalities; the empty case is reflexivity. This is \code{path2PreservesParity}.
\end{proof}

\begin{corollary}[A family of unfillable squares]\label{cor:contra1}
For $u:A\to B$ and $p:\Path(a,b)$ there is no cell between
\begin{align*}
P&=\Seq(\ApCong(\lambda t.\,ta,\Eta(u)),\ap_u(p)),\\
Q&=\ap_{\lambda x.\,ux}(p)\concatop\lembed{\Beta(u,b)}.
\end{align*}
\end{corollary}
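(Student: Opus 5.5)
We argue by contradiction using the parity invariant: we compute $\parity(P)$ and $\parity(Q)$ from Definition~\ref{def:parity} and Lemma~\ref{lem:paritycomp}, show that they differ for every $p$, and then Theorem~\ref{prop:parityinv} rules out any $\alpha:\StepTwo(P,Q)$. The same theorem also excludes any $R:\PathTwo(P,Q)$, and, via $\sym_2$, any cell in the reverse direction.

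First we compute the grading of $P$. Since $P=\Seq(\ApCong(\lambda t.\,ta,\Eta(u)),\ap_u(p))$, the path clause gives $\parity(P)=\parity(\ApCong(\lambda t.\,ta,\Eta(u)))\oplus\parity(\ap_u(p))$. The congruence clause and the eta clause give $\parity(\ApCong(\lambda t.\,ta,\Eta(u)))=\parity(\Eta(u))=0$. Lemma~\ref{lem:paritycomp} gives $\parity(\ap_u(p))=\parity(p)$. Hence $\parity(P)=\parity(p)$.

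Next we compute the grading of $Q$. By the concatenation clause of Lemma~\ref{lem:paritycomp}, $\parity(Q)=\parity(\ap_{\lambda x.\,ux}(p))\oplus\parity(\lembed{\Beta(u,b)})$. By the same lemma and the beta clause, this equals $\parity(p)\oplus 1$.

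Finally, $\parity(p)\neq\parity(p)\oplus1$ holds for every Boolean value. To see this, case split on the Boolean $\parity(p)$: both cases reduce to $0\neq1$ or $1\neq0$, i.e.\ to the disjointness of the two Boolean constructors. So if $\alpha:\StepTwo(P,Q)$ existed, Theorem~\ref{prop:parityinv} would give $\parity(P)=\parity(Q)$, which is absurd. The result is therefore a function $\StepTwo(P,Q)\to\mathsf{Void}$.

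The main obstacle is making sure the boundary types match, so that $P$ and $Q$ are genuinely parallel and the type $\StepTwo(P,Q)$ is well formed. $P$ starts at $(\lambda x.\,ux)\,a$, obtained by applying $\lambda t.\,ta$ to the eta-expanded $u$, and ends at $u\,b$. $Q$ has the same endpoints. This relies on the host's definitional equalities $(\lambda t.\,ta)(\lambda x.\,ux)\defineq(\lambda x.\,ux)\,a$ and $(\lambda t.\,ta)\,u\defineq u\,a$. Once these endpoint conversions are accepted by the checker, the proof itself is only rewriting with Lemma~\ref{lem:paritycomp} and a case analysis on one Boolean.
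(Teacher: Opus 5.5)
Your proposal is correct and follows the paper's argument. The paper likewise computes $\parity(P)=\parity(p)$ and $\parity(Q)=\parity(p)\oplus1$ from Definition~\ref{def:parity} and Lemma~\ref{lem:paritycomp}, then uses Theorem~\ref{prop:parityinv} to exclude any $\StepTwo$ or $\PathTwo$ witness; your endpoint check and Boolean case split are details the paper leaves to \code{contraexample1}.
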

\begin{proof}
The gradings are $\parity(p)$ and $\parity(p)\oplus1$, respectively. These cannot be equal in $\Ftwo$ (\code{contraexample1}).
\end{proof}

\begin{corollary}[The key intensionality failure]\label{cor:contra1nil}\label{cor:contra2}
For $u:A\to B$ and $a:A$, put $M=(\lambda x.\,ux)\,a$. The certificates
\begin{align*}
b&=\Beta(u,a):\Step(M,ua),\\
e&=\ApCong(\lambda t.\,ta,\Eta(u)):\Step(M,ua)
\end{align*}
have the same written source and target. Their singleton paths $P_\beta=\lembed b$ and $P_\eta=\lembed e$ have neither a $\StepTwo$ nor a $\PathTwo$ witness between them. They are therefore distinct under the identification relation of our two-dimensional theory; \code{commonSourceNoNativePathEquality} also proves $P_\eta\ne P_\beta$ in MLTT's native Idris equality. This inequality is asserted only for this selected pair.
\end{corollary}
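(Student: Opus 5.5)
The plan is to reduce both non-existence claims to the parity invariant of Theorem~\ref{prop:parityinv}. First I compute the gradings of the two singleton paths. By Lemma~\ref{lem:paritycomp}, $\parity(\lembed s)=\parity(s)$. So $\parity(P_\beta)=\parity(\Beta(u,a))=1$ by Definition~\ref{def:parity}. Using the $\ApCong$ clause, $\parity(P_\eta)=\parity(\ApCong(\lambda t.\,ta,\Eta(u)))=\parity(\Eta(u))=0$. Alternatively, this is the special case $p=\Nil_{a}$ of Corollary~\ref{cor:contra1}. There $P$ reduces to $\Seq(e,\Nil)=P_\eta$, since $\ap_u(\Nil)\defineq\Nil$, and $Q$ reduces to $\Nil\concatop\lembed{b}=P_\beta$. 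I would mention this as the conceptual link but compute directly, to avoid any reliance on definitional unfolding of $\ap$ at an abstract endpoint.

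Next, suppose $\alpha:\StepTwo(P_\beta,P_\eta)$, or one in the other direction. Theorem~\ref{prop:parityinv} then gives $1=0$ as Booleans, which is absurd because the constructors of the Boolean type are distinct. The same argument applies verbatim to any $R:\PathTwo(P_\beta,P_\eta)$, using the second half of Theorem~\ref{prop:parityinv}. The reverse direction needs no separate treatment: either apply symmetry of Boolean equality, or note that $\sym_2$ and $\invTwo$ would turn a reverse witness into a forward one.

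For the native inequality $P_\eta\ne P_\beta$ in Idris equality, I would argue as follows. Assume $P_\eta=P_\beta$ as MLTT identity. Apply $\ap$ of the host, congruence under the function $\parity$, to get $0=1$. This is again refuted by constructor disjointness. The alternative, matching on the head constructor inside $\Seq$ ($\ApCong$ versus $\Beta$), would require heterogeneous index unification. The parity route avoids that and stays purely propositional.

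The main obstacle I expect is not the logic but the typing. The two steps must share the index $\Step(M,ua)$, so that "no cell between them" is even well-typed. That requires $(\lambda t.\,ta)(\lambda x.\,ux)$ and $(\lambda t.\,ta)\,u$ to be definitionally $M$ and $ua$ respectively. In the host this holds by $\beta\eta$ conversion, and I would check it by letting the type checker normalise the endpoints. Once the indices agree, everything else is a direct appeal to the earlier invariance theorem.
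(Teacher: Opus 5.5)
Your proposal is correct and follows the paper's own proof. The gradings $\parity(P_\beta)=1$ and $\parity(P_\eta)=0$, together with Theorem~\ref{prop:parityinv}, exclude every $\StepTwo$ and $\PathTwo$ witness, and the native inequality is obtained, as in \code{commonSourceNoNativePathEquality}, by pushing a hypothetical host equality through $\parity$ to reach $\mathsf{False}=\mathsf{True}$.
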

\begin{proof}
Beta contracts the application in $M$; eta contracts its function subterm in the context $[-]\,a$. The latter source is represented by $(\lambda t.\,ta)(\lambda x.\,ux)$, definitionally $M$. Their gradings are $1$ and $0$, so parity excludes every cell and cell sequence. A hypothetical MLTT-native equality would likewise give $\mathsf{False}=\mathsf{True}$; this is \code{commonSourceNoNativePathEquality}. The declarations \code{commonSourceNoStep2} and \code{commonSourceNoPath2} prove the separation in our two-dimensional theory. Moreover,
\[
L_{\beta\eta}=P_\beta\concatop P_\eta^{-1}:\Path(M,M)
\]
is a canonical loop in our two-dimensional theory with parity $1$; \code{commonSourceBetaEtaLoopNontrivial} proves that no $\PathTwo$ connects it to $\Nil$. This is non-triviality in our two-dimensional theory, not yet a construction of a higher-inductive $S^1$ or a proof that its fundamental group is $\mathbb Z$.
Thus, in the present two-dimensional presentation, the combination
\(P_\beta\concatop P_\eta^{-1}\) produces a non-trivial syntactic loop. Its
non-triviality is established constructively by the parity invariant and does
not invoke univalence.
\end{proof}

\begin{example}[Comparing selected MLTT-native and two-dimensional witnesses]\label{ex:native}
For the common source $M$ of Corollary~\ref{cor:contra2}, both routes justify $M\defineq ua$ in the judgemental beta/eta presentation of Section~\ref{sec:judgemental}. Representing these conversions by MLTT-native reflexivity yields
\[
e_\eta:=\mathsf{refl}^{\mathrm{Id}}_M:\mathrm{Id}_B(M,ua),\qquad
e_\beta:=\mathsf{refl}^{\mathrm{Id}}_M:\mathrm{Id}_B(M,ua).
\]
Both types convert to $ua=ua$ and both witnesses are definitionally the same in MLTT. In Idris, \code{nativeEtaEvidence} and \code{nativeBetaEvidence} are the selected MLTT-native images of the routes after endpoint conversion; \code{nativeBetaEtaSameEvidence} proves their equality. Hence, for every observation $O:(ua=ua)\to\{0,1\}$,
\[
O(e_\eta)=O(e_\beta),\qquad
\parity(P_\eta)=0\ne1=\parity(P_\beta).
\]
The first equality is \code{nativeEvidenceObservationsAgree}; the parity comparison survives all 2-cells of our two-dimensional theory. Thus $P_\eta$ and $P_\beta$ are unequal \emph{Path} data in our two-dimensional theory, while their selected MLTT-native images coincide: the translation forgets the beta/eta labels. This is the intended collapse under MLTT-native reflexivity. The example asserts neither uniqueness of arbitrary identity proofs nor a general translation of $\Step$ into MLTT-native identity.
\end{example}

\begin{remark}[Intensionality with respect to MLTT]\label{rem:intensional}
	Corollary~\ref{cor:contra2} together with Example~\ref{ex:native} shows that our typed system of order-2 $\lambda$-calculus is \emph{really intensional} with respect to the native MLTT core of Idris: the computation rules of Idris identify $\Eta$ and $\Beta$ witnesses by reduction, whereas in $\Path$ the corresponding paths are distinct, and the parity invariant certifies that no pasting of 2-cells can ever identify them. The system therefore does not validate the identification of $\beta$- and $\eta$-contractions as equal evidence. The philosophical significance of this separation --- evidence with content, intensionality as a theorem, and the finer-grained computational plane --- is developed in Section~\ref{sec:philosophical}.
\end{remark}

\section{Equality evidence and the identity of proofs}\label{sec:philosophical}

The formal results of the previous sections admit a coherent philosophical interpretation, spelled out here for the readership of a logic journal: the theory is a \emph{proposal about what equality evidence is} and about what constructivism should demand of it. 

\noindent\textbf{Constructivism and the BHK reading of evidence.} According to the Brouwer--Heyting--Kolmogorov (BHK) interpretation, a proof is a construction that transforms its data; in particular, a proof of an equality between terms should be a \emph{procedure} transforming one term into the other. Intensional MLTT honours this reading only at the base: $\mathsf{refl}$ is canonical, but the $J$-eliminator is a postulate without computational content, and an arbitrary witness of $\mathsf{Id}_A(a,b)$ records no computation at all. Our $\Step$ and $\Path$ types are a BHK reading made literal: each rule states \emph{what counts as evidence} --- $\Beta$ and $\Eta$ are rewriting procedures, $\ApCong$ and $\LamCong$ procedures applied under context, $\refl$, $\sym$, $\trans$ the structural operations --- and recursion over $\Path$ is induction over constructions, with any step as a base case. Equality evidence is thus required to \emph{be} a computation, in the spirit of Brouwer's ``no mathematical truth without a construction'', rather than certified by reflexivity and transported along definitional coercion.

\noindent\textbf{Intensionality: the identity of evidence.}
The intensional/extensional distinction is the Fregean question of whether identity of \emph{sense} is recorded when identity of \emph{reference} is asserted; in type theory the question is proof-relevant: do identity proofs carry information beyond the equality of the endpoints? MLTT answers ``yes'' syntactically, but its definitional equality then \emph{collapses senses}: in the native core of Idris, the $\beta$-redex $(\lambda t.\, t\, a)(\lambda x.\, u\, x)$ and the $\eta$-redex $(\lambda t.\, t)((\lambda x.\, u\, x)\, a)$ are one and the same term $u\, a$, so no proposition of the core can even \emph{express} their difference. Section~\ref{sec:consistency} upgrades this weak, accidental intensionality into a theorem: the parity invariant proves that the sense ``apply $\beta$'' and the sense ``apply $\eta$'' are \emph{provably distinct presentations} of the same reference $u\, a$ (Corollary~\ref{cor:contra2}). Intensionality is thus a positive, witnessed property: not merely the absence of a proof of coincidence, but a proof of non-coincidence --- the theory distinguishes proofs of the same proposition by their computational content, which is precisely the Fregean demand made precise.

\noindent\textbf{Relation to homotopy type theory.}
Homotopy type theory organizes identity evidence into an $\infty$-groupoid and postulates univalence --- identity of types is equivalence of types. Two remarks place our theory relative to this program. (i) Univalence, like $\mathsf{StepNat}$, is a postulate without computational content; our order-2 layer shows that the \emph{low-dimensional} groupoidal structure need not be postulated: the typed $2\beta/2\eta$ squares of Section~\ref{sec:step2} and the coherence laws of Section~\ref{sec:algebra} derive the pasting structure from reduction rules. (ii) HoTT is proof-relevant, yet it inherits MLTT's definitional collapse of $\beta$- and $\eta$-redexes; our parity invariant shows that keeping them apart is \emph{compatible} with the full 2-dimensional coherence --- the groupoid laws do not force the identification of distinct rewrites. The theory can thus be read as a constructive, proof-relevant refinement of the identity machinery shared by MLTT and HoTT, in which higher evidence is generated by computation rather than postulated by induction principles. 

HoTT already derives groupoid operations, transport and naturality of identity-valued homotopies by path induction
\cite[Secs.~2.1--2.4, especially Lemma~2.4.3]{hottbook}. In its standard
presentation, the circle is a higher-inductive type with a base point and a
generating loop, while the usual encode--decode calculation
\(\pi_1(S^1)\simeq\mathbb Z\) uses a universe-valued cover built with
univalence \cite[Secs.~6.4 and 8.1]{hottbook}. The present result instead
concerns a specified algebra of labelled conversions and generated 2-cells:
\(L_{\beta\eta}\) is a labelled loop in our two-dimensional theory with a parity separation, not yet
that higher-inductive circle or an integer classification. However, 
if a circle in our two-dimensional theory—along with its loop group structure and an integer classification—is
subsequently constructed directly from the tagged syntax, that internal proof
could be carried out without using the univalence axiom (work in progress). This would represent an
advantage over HoTT, since incorporating the univalence axiom into MLTT complicates the proof of confluence in HoTT (which remains unproven) and thereby hinders the decidability of the identity.

\noindent\textbf{Relation to $LND_{EQ}-TRS$ \cite{ramos2026calculus}.} 
It is also an intentional theory governed by the same rules as Step. Although it shares the structural rules of \code{Step2}, it does not include the typed $2\beta$ and $2\eta$ squares with their coherences.  Furthermore in  the explicit computational-path setting of Ramos et al., homotopies are likewise presented pointwise, but each component is a syntactic or computational path; the
corresponding \code{NatPath} (or \code{NatStep}, in the terminology used here)
is a directed rewriting rule over those pointwise paths
\cite{ramos2018explicit}. Our theory makes a different choice:
\code{HomotopyStep} is a semantic interface, and \code{HomotopyPath} stores a
global sequence whose entries are such families, whereas \code{HomotopyStepI}
and \code{HomotopyPathI} inductively record a global sequence of presented
steps between functions. Therefore the pointwise \code{NatPath} and
\code{StepNat} claims do not follow in the theory; the derivable
naturality theorems are the ones built from these global inductive homotopies
and their certified evaluators, and proved using our higher-level steps
\code{Beta2Step}, \code{Eta2Step}, \code{apCongStep2}, 
\code{LamCong2Step} etc.

There is also a methodological difference in the algebraic layer. In the
present theory, the algebraic properties of paths are proved by structural recursion over sequences of one-step conversions; in $LND_{EQ}-TRS$, the corresponding properties are given
directly as rules over computational paths.

\section{Conclusions and future work}\label{sec:conclusion}
We have presented a typed order-2 $\lambda$-calculus, based on the higher $\lambda$-models of \cite{martinez2023arbitrary}, whose equality evidence is computational: paths are explicit sequences of one-step conversions, 2-cells are explicit pastings of typed $2\beta$ and $2\eta$ squares, and every coherence, naturality and transport law in the computational plane is proved by recursion, with \emph{any} step available as a base case --- no $J$-eliminator, no postulates. Three findings summarize the paper.

\begin{enumerate}[(1)]
	\item \textbf{Computable naturality.} Naturality is a theorem for inductive homotopies with evaluations ($\NatStepI$, $\NatPathI$), while for \emph{semantic} homotopies do not hold in our theory.
	\item \textbf{Consistency and intensionality.} The parity invariant proves consistency and the central negative result: the $\beta$- and $\eta$-contractions are distinct evidence and no 2-cell can identify them, whereas the native MLTT core of Idris identifies them definitionally. Intensionality is here a \emph{theorem}, not an accident.
	\item \textbf{A constructive philosophy.} Read philosophically (Section~\ref{sec:philosophical}), the theory is a BHK-style constructivism in which evidence \emph{is} computation, a proof-relevant intensionality that keeps senses distinct, and an explicit boundary --- drawn inside the formal system --- between effective procedures and arbitrary functions.
\end{enumerate}
\emph{Future work.}  (i) Construct the type $S^1$ and calculate its fundamental group using the canonical non-trivial loop $L_{\beta\eta}$, without using univalence. (ii) Complete the coherence laws of $\PathTwo$ (an Eckmann--Hilton argument);  (iii) promote the parity invariant to a full logical relation, for normalization and canonicity; (iv) extend the system to order $\infty$ following \cite{martinez2023groupoid,lurie2009}.

\backmatter
\section*{Declarations}
\textbf{Code availability.} The accompanying sources are \code{Path.idr},
\code{PathExamples.idr}, \code{PathNonComputable.idr},
\code{PathStepNatRefutation.idr}, the \code{StepNatProof} modules, and
\code{Path.ipkg}.
Their roles and verification command are given in Appendix~\ref{sec:appendix}.

\textbf{Author information.} The three author affiliations are listed on the
title page.

\textbf{Funding.} No external funding was received.

\textbf{Competing interests.} The authors declare that they have no competing
interests.

\begin{appendices}
\section{Correspondence with the formalization}\label{sec:appendix}

The accompanying development was checked with Idris 2 version 0.8.0
(commit \code{acde1c926}). The three modules use \code{\%default total}
\cite[Theorem Proving: Totality Checking]{idris2}. In particular, a statement of type
$T\to\mathsf{Void}$ is checked as a total elimination of every possible input,
not accepted merely because one impossible constructor case was written.
A parallel Lean formalization is published in the Palomar registry~\cite{palomar2026lean}.
The build command, from the project directory, is
\begin{verbatim}
idris2 --build Path.ipkg
\end{verbatim}

\begin{center}
\scriptsize
\setlength{\tabcolsep}{2pt}
\renewcommand{\arraystretch}{0.78}
\begin{tabular}{@{}p{0.44\textwidth}p{0.51\textwidth}@{}}
\toprule
Mathematical statement & Idris declarations\\
\midrule
Steps, paths, and 2-cells (Defs.~\ref{def:step}, \ref{def:path}, \ref{def:step2}) &
\code{Step}, \code{Path}, \code{Step2}\\
Path algebra (Sec.~\ref{sec:algebra}) &
\code{leftInv2}, \code{rightInv2}, \code{assoc2}, \code{invDistrib}, \code{invInv}\\
Beta, eta and lambda-congruence squares &
  \code{Beta2Step}, \code{Eta2Step}, \code{LamCong2Step}; path-level builders \code{beta2}, \code{eta2}, \code{lamCong2}\\
Certified step evaluation (Thm.~\ref{prop:evalhsi}) &
\code{evalHomotopyStepIHom}, \code{EvalHSI}, \code{EHSRefl}, \code{EHSBeta}, \code{EHSEta}, \code{EHSApCong}, \code{EHSLamCong}, \code{EHSSym}, \code{EHSTrans}, \code{evalHomotopyStepI}\\
Function-space evaluation (Def.~\ref{def:evalhpi}) &
\code{EvalHPI}, \code{evalHomotopyPathI}\\
Direct naturality (Thms.~\ref{thm:nattransi}, \ref{thm:nattranshp2}) &
\code{NatStepI}, \code{NatPathI}\\
Conjugation and transport (Sec.~\ref{sec:path2}) &
\code{ConjApcongStep2Eval}, \code{Transp2viaHP}, \code{Transp2viaHPResp}, \code{Transp2viaHPApcongEval}, \code{Untransp2viaHP}\\
Parity and non-collapse (Sec.~\ref{sec:consistency}) &
\code{step2PreservesParity}, \code{path2PreservesParity}, \code{contraexample1}, \code{contraexample1Nil}, \code{contraexample2}, \code{contraexample2Path2}\\
Common-source separation (Cor.~\ref{cor:contra2}) &
\code{commonSourceBetaStep}, \code{commonSourceEtaStep}, \code{commonSourceBetaPath}, \code{commonSourceEtaPath}, \code{commonSourceNoStep2}, \code{commonSourceNoPath2}, \code{commonSourceNoNativePathEquality}, \code{commonSourceDistinctWitnesses}, \code{commonSourceBetaEtaLoop}, \code{commonSourceBetaEtaLoopNontrivial}\\
MLTT-native witness comparison (Ex.~\ref{ex:native}) &
\code{nativeBetaEtaSameTerm}, \code{nativeBetaEtaSameEvidence}, \code{nativeEvidenceObservationsAgree}\\
Constructive refutation of \code{StepNat} &
\code{StepNat}, \code{notStepNat}, \code{notNatPath}, \code{decStepNat}\\
\bottomrule
\end{tabular}
\end{center}

\code{Path.idr} contains the presentation, path algebra, certificate builders,
and transport constructions. \code{PathExamples.idr} proves the invariant and
separation results. \code{PathNonComputable.idr} declares the aliases
\code{StepNat} and \code{NatPath} without inhabitants; their refutations are exported by
\code{PathStepNatRefutation.idr}. There is no assumed inhabitant. Type checking
certifies the displayed derivations relative to
the implementation and trusted MLTT/Idris host; it does not certify the cited papers,
a semantic interpretation, or the philosophical analysis.
\end{appendices}

\begingroup
\interlinepenalty=10000
\setlength{\bibsep}{0.1em}
\renewcommand{\bibfont}{\fontfamily{\rmdefault}\fontsize{7pt}{8pt}\selectfont}
\bibliography{path-article}
\endgroup
\end{document}